\newtheorem{thm}{Theorem}
\newtheorem{lem}{Lemma}
\newtheorem{cor}{Corollary}
\begin{document}
\title{Symmetric $M$-ary phase discrimination using quantum-optical probe states}
\author{Ranjith Nair}
\affiliation{Department of Electrical and Computer Engineering, National University of Singapore, 4 Engineering Drive 3, Singapore 117583}
\author{Brent J.~Yen}
\affiliation{Department of Electrical and Computer Engineering, National University of Singapore, 4 Engineering Drive 3, Singapore 117583}
\author{Saikat Guha}
\affiliation{Disruptive Information Processing Technologies Group, Raytheon BBN Technologies, Cambridge, Massachusetts 02138, USA}
\author{Jeffrey H.~Shapiro} \affiliation{Research Laboratory of Electronics, Massachusetts Institute of Technology, Cambridge, Massachusetts 02139, USA}
\author{Stefano Pirandola}
\affiliation{Department of Computer Science, University of York, York YO10 5GH, United Kingdom}
\date{June 4, 2012}
\newcommand\mbf[1]{\mathbf{#1}}
\newcommand\ovl[1]{\overline{#1}}
\newcommand\udl[1]{\underline{#1}}
\newcommand {\ket}[1] {|{#1}\rangle}
\newcommand {\kets} [1] {|#1\rangle_{S}}
\newcommand {\ketr} [1] {|#1\rangle_{R}}
\newcommand {\keti} [1] {|#1\rangle_{I}}
\newcommand {\ketsi} [1] {|#1\rangle_{IS}}
\newcommand {\ketri} [1] {|#1\rangle_{IR}}
\newcommand {\brasi} [1] {\hspace{0mm}_{IS}\langle#1|}
\newcommand {\brari} [1] {\hspace{0mm}_{IR}\langle#1|}
\newcommand {\bras} [1] {\hspace{0mm}_{S}\langle#1|}
\newcommand {\brai} [1] {\hspace{0mm}_{I}\langle#1|}
\newcommand {\bra}[1] {\langle{#1}|}
\newcommand{\braket}[2]{\langle{#1}|{#2}\rangle}
\newcommand{\brakets}[2]{\hspace{0mm}_{S}\langle#1|#2\rangle_{S}}
\newcommand{\braketr}[2]{\hspace{0mm}_{R}\left\langle#1\,|\,#2\right\rangle_{R}}
\newcommand{\braketis}[2]{\hspace{0mm}_{IS}\langle#1|#2\rangle_{IS}}
\newcommand{\braketir}[2]{\hspace{0mm}_{IR}\langle#1|#2\rangle_{IR}}
\newcommand {\cl}{\mathcal}
\newcommand {\tsf} [1]{\textsf{#1}}
\newcommand {\norm} [1] {\parallel #1 \parallel}
\newcommand{\tr} {\tsf{tr}}

\begin{abstract}
We present a theoretical study of minimum error probability discrimination, using quantum-optical probe states, of $M$ optical phase shifts situated symmetrically on the unit circle. We assume ideal lossless conditions and full freedom for implementing quantum measurements and for probe state selection, subject only to a constraint on the average energy, i.e., photon number. In particular, the probe state is allowed to have any number of signal and ancillary modes, and to be pure or mixed. Our results are based on a simple criterion that partitions the set of pure probe states into equivalence classes with the same error probability performance. Under an energy constraint, we find the explicit form of the state that minimizes the error probability. This state is an unentangled but nonclassical single-mode state. The error performance of the optimal state is compared with several standard states in quantum optics. We also show that discrimination with zero error is possible only beyond a threshold energy of $(M-1)/2$. For the $M=2$ case, we show that the optimum performance is readily demonstrable with current technology. While transmission loss and detector inefficiencies lead to a nonzero erasure probability, the error rate conditional on no erasure is shown to remain the same as the optimal lossless error rate. 
\end{abstract}
\maketitle

\section{Introduction}
The estimation of an optical phase shift using quantum states of light is a well-known theme of both theoretical and experimental studies, and is still an active area of research, as seen from a sample \cite{phaseestimationreferences} of recent work. A less well-known but analogous sensing problem is that of discriminating a finite number $M \geq 2$ of phase shifts  symmetrically arranged on the unit circle, which may be thought of as a discrete version of the phase estimation problem. The problems differ in the criterion used to measure performance -- in the estimation problem, a typical figure of merit is the mean squared error while an error probability criterion is natural for the discrimination problem. 

The phase discrimination problem may be viewed in communication terms in analogy with $M$-ary phase shift keying (PSK) in ordinary and optical communications \cite{Proakis07,Agrawal10}.  In $M$-ary PSK, one of $M$ uniformly spaced phase shifts is applied to a predetermined waveform for the purpose of communicating one of $M$ messages (or, equivalently, $\log_2 M$ bits of data) from the sender to the receiver. Binary PSK (BPSK) using coherent states of light, with error probabilities near the quantum limit in the absence of noise, is already a rather mature technology \cite{GnauckWinzer05}. As the demand for high-speed communication increases, $M$-ary PSK with $M>2$ is becoming attractive in optical communication, despite the increased system complexity, because it provides an increase in the number of bits per transmitted symbol without an increase in the frequency bandwidth required \cite{Agrawal10}. In the quantum version of PSK (see, e.g., the studies \cite{HallFuss91,Shapiro93,Kato99,Becerraetal11}) that is the subject of this paper, an optical mode prepared in a predetermined quantum state is the analog of the classical signal waveform to which the information-bearing phase shifts are applied. 

Another application of our study is to the recently developed concept of quantum reading of a classical digital memory \cite{Pirandola11,Nair11,Hirota11,Pirandolaetal11,Guhaetal11,Weedbrooketal11}. The original proposal of Ref.~\cite{Pirandola11} considered the use of a quantum-optical probe state that reads a standard optically encoded digital memory such as a CD or DVD with a bit error probability better than that achievable with standard laser sources. In a CD or DVD, information is stored by varying the properties of the recording surface in a rather involved manner -- see, e.g., Ref. \cite{Brooker03}. However, the overall reading process may be modeled as the discrimination of two beam-splitter channels with transmittance depending on the data bit, i.e., the model involves an amplitude (rather than phase) encoding. In Ref.~\cite{Nair11}, a general problem of discriminating two beam-splitter channels was analyzed that models any kind of bit encoding, either in phase or amplitude or a combination of the two. 

The quantum reading and beam-splitter discrimination problems of Refs.~\cite{Pirandola11} and \cite{Nair11} include optical loss naturally and correspond mathematically to the discrimination of non-unitary quantum channels. In Ref.~\cite{Hirota11}, a purely phase-encoded memory, which can be lossless in principle, was proposed. This memory encodes a  bit $0\, (1)$ as a $0\, (\pi)$  radian phase shift imparted directly to a probe beam upon reflection from the encoding surface. As such, it corresponds exactly to the $M=2$ case of the problem considered in this paper.  Some variants of quantum reading that correspond to discrimination of unitary channels have been proposed and also experimentally demonstrated \cite{Bisio11,Dall'Arno11}. Quantum reading of an amplitude-encoded memory has been studied from an information-theoretic perspective in refs.~\cite{Pirandolaetal11,Guhaetal11}.
 
Viewed as a problem of quantum decision theory, the phase discrimination problem falls under the general rubric of distinguishing a \emph{symmetric set} of quantum states that has been  studied extensively \cite{Belavkin75,Helstrom76,Ban97,Holevo79,HausladenWootters94,Hausladenetal96}. The optimal quantum measurement for a given probe state was obtained in the pioneering works \cite{Belavkin75,Helstrom76,Ban97}. This optimal measurement, called variously as the \emph{Square-Root Measurement} (SRM), \emph{Least-Squares Measurement}, and \emph{Pretty Good Measurement}, has many interesting properties and has also been applied to quantum information theory \cite{Holevo79,HausladenWootters94,Hausladenetal96}. 

In this paper, our concern is mainly with the \emph{design problem} of choosing the best (i.e., yielding the least error probability) probe state under a given energy constraint.  This problem has not been addressed in full generality in the literature, although related studies of quantum communication using $M$-ary PSK exist. Thus, for single-mode probe states, the problem of probe state optimization under an energy constraint was introduced and studied  in ref.~\cite{HallFuss91}, but under a restricted class of allowed measurements. The possibility of zero-error communication was mentioned in Ref.~\cite{HallFuss91} for single-mode states, while in ref.~\cite{Shapiro93}, it was established that two-mode phase-conjugate PSK achieves the same end. The SRM was used to give a detailed performance evaluation of $M$-ary PSK for the case of the practically important coherent-state transmitters in ref.~\cite{Kato99}.  Very recently, a receiver that achieves near-optimal error probability for coherent-state $M$-PSK was proposed \cite{Becerraetal11} and demonstrated for 4-PSK. However, a fully general treatment of the $M$-ary phase discrimination problem is as yet lacking, even in the ideal lossless case. In this paper, we present such an analysis and obtain the best probe state over all quantum states of any number of signal and ancillary modes under an average energy constraint. 

This paper is organized as follows. In Section II, we set up the mathematical model of the symmetric phase discrimination problem along with a discussion of the physical assumptions involved. Section III elaborates its solution as follows. In Sections III.A-III.C, we consider the case of a pure-state probe. In Section III.A,  we obtain the optimal state under an energy constraint on the modes experiencing the phase shift. In Section III.B, we briefly discuss the associated quantum measurement that achieves the minimum error probability. In Section III.C, we consider the case of a combined energy constraint on all the probe-state modes. In Section III.D,  the optimality proof under both energy constraints is extended to allow for mixed-state probes. In Section III.E, we  present performance curves for several standard states in quantum optics alongside those of the optimal state and the coherent states. In Section IV, we present an easy implementation of $M=2$ phase discrimination under both lossless and lossy conditions. In Section V, we conclude by discussing some possible future directions based on the present work.

\section{Symmetric Phase Discrimination: Problem Setup}

\subsection{Probe state}

Consider, for an integer $M \geq 2$, for $\theta_M := 2\pi/M$, and for  $\mathbb{Z}_M :=  \{0, \ldots, M-1\}$, the set $\{\theta_m := m\, \theta_M \, |\, m \in \mathbb{Z}_M\}$ of phase shifts symmetrically disposed on the unit circle as in Fig.~1.
\begin{figure}
\includegraphics[trim= 50mm 165mm 45mm 50mm, clip=true, width=0.5\textwidth]{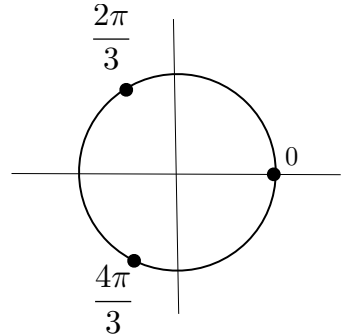}
\caption{Symmetric phase shifts on the unit circle for $M=3$.}
\end{figure}
 The application of, say, the $m$-th such phase shift to each of $J\geq1$ quasi-monochromatic optical field modes (with annihilation operators $\{\hat{a}_S^{(j)}\}_{j=1}^{J}$) is represented by the unitary operator
\begin{align} \label{unitary.m}
  \hat{U}_m = \bigotimes_{j=1}^{J} e^{i m \theta_M \hat{N}_S^{(j)}} \equiv \bigotimes_{j=1}^{J} \hat{U}_m^{(j)},\
      m \in  \mathbb{Z}_M,
\end{align}
where $\hat{N}_S^{(j)} = \hat{a}_S^{(j)\dag}\hat{a}_S^{(j)}$ is the number operator of the $j$-th mode, with $1 \leq j \leq J$. These modes that undergo the phase shift are called \emph{signal} modes, indicated by the subscript `S' (see Fig.~2).  In addition to the $J$ signal modes, we also allow, as depicted in Fig.~2, any number $J'\geq 0$ of ancillary modes. These  are called \emph{idler} modes, indicated by the subscript `I', and have annihilation operators $\{\hat{a}_I^{(j')}\}_{j'=1}^{J'}$. The idler modes do not acquire the $m$-dependent phase shift, but allow for the preparation of a quantum state that is entangled across the signal and idler modes.  Such a joint state on the signal and idler modes will be called a \emph{probe state}. We are interested in the problem of the choice of probe state that minimizes the error probability in determining $m$ when the latter is drawn, unknown to the receiver, at random from $\mathbb{Z}_M$. 
\begin{figure}
\begin{tikzpicture}[font=\small,>=stealth']
  \node at (0,0.8)         (S) {$\hat{a}_S^{(j)}$};
  \node at (0,0)           (I) {$\hat{a}_I^{(j')}$};
  \node[draw] at (1.8,1.8) (U) {$\hat{U}_m^{(j)}$};
  \node at (4,0.8)         (S2) {$\hat{a}_R^{(j)}$};
  \node at (4,0)           (I2) {$\hat{a}_I^{(j')}$};
  \draw[->] (S) -- (U);
  \draw[->] (U) -- (S2);
  \draw[->] (I) -- (I2);
\end{tikzpicture}
\caption{A  pure state $|\psi\rangle_{IS}$ of $J$ signal modes (represented by the annihilation operators $\{\hat{a}_S^{(j)}\}_{j=1}^J$) and $J'$ idler modes (represented by the annihilation operators $\{\hat{a}_I^{(j')}\}_{j'=1}^{J'}$) is prepared. The signal modes pass through a phase-shifting element that modulates the phase of the incident light via one of the unitary transformations $\hat{U}_m$ specified by Eq.~\eqref{unitary.m}.  The return modes (represented by the annihilation operators $\{\hat{a}_R^{(j)}\}_{j=1}^J$) and
idler modes, the latter remaining unaffected by the phase shift, are measured using a minimum error probability quantum measurement.}
\end{figure}
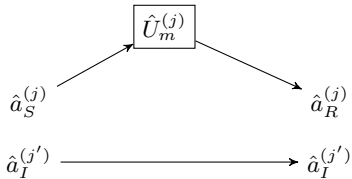

An arbitrary pure probe state of $J+J'$ modes may be written in the multimode number basis as
\begin{align}
\label{probe}
  \ket{\psi}_{IS}= \sum_{\mbf{k},\mbf{n}} c_{\mbf{k},\mbf{n}} \keti{\mbf{k}} \kets{\mbf{n}},
\end{align}
where $|\mbf{k}\rangle_I = |k_1\rangle\otimes\dotsb\otimes|k_{J'}\rangle$
and $|\mbf{n}\rangle_S = |n_1\rangle\otimes\dotsb\otimes|n_J\rangle$ are multimode Fock states.
For any such probe state, the output states for the idler and \emph{return} (`R') modes are
\begin{align} \label{output states} 
\ketri{\psi_m} = \hat{I}_I\otimes \hat{U}_m  \ketsi{\psi}, \hspace{4mm} m \in \mathbb{Z}_M,
\end{align}
where $\hat{I}_I$ is the identity transformation on the idler modes. The set of complex numbers $\{e^{im\theta_M} |\hspace{1mm}m \in \mathbb{Z}_M\}$ is a cyclic group of order $M$ under multiplication and consequently, so is the set of $M$ unitary operators $\{\hat{I}_I \otimes \hat{U}_m\, |\, m \in \mathbb{Z}_M\}$. Assuming equal \emph{a priori} probabilities for the $M$ unitaries, the output states of Eq.~\eqref{output states} satisfy the \emph{symmetric set} condition  of ref.~\cite{Ban97} defined by
\begin{align} \label{symmetric set condition.I}
\ket{\psi_m} &= \hat{V}^m \ket{\psi}, \hspace{4mm}m \in \mathbb{Z}_M, \\
\hat{V}^M &= \hat{I} \label{symmetric set condition.II},
\end{align}
for some unitary operator $\hat{V}$ and some seed state $\ket{\psi}$ \cite{footnote1}. The correspondence to our problem is obtained by taking the seed state $\ket{\psi}$ in \eqref{symmetric set condition.I} to be the probe state $\ketsi{\psi}$ and the generating unitary operator $\hat{V}$ of eqs.~(\ref{symmetric set condition.I}) and (\ref{symmetric set condition.II}) to be
\begin{align}
\hat{V}  = \hat{I}_I \otimes \bigotimes_{j=1}^{J}e^{i  \theta_M \hat{N}_S^{(j)}} = \hat{I}_I \otimes \hat{U}_1
\end{align}
with $\hat{U}_1$ given by eq.~\eqref{unitary.m}.

The discrimination strategy of Fig.~2 may be called an \emph{entanglement-assisted parallel strategy} in analogy with the terminology of  ref.~\cite{GLM11}, and corresponds to Fig.~2d of \cite{GLM11} augmented with idler modes. This is clearly not the most general strategy. For example, we may consider \emph{sequential strategies} (see Fig.~$3$ of \cite{GLM11}). It is easy to show that the simple sequential strategy depicted in Fig.~$3a$ of \cite{GLM11} cannot help, as follows. Successive application of two phase shifts of $m \theta_M$ rad results in a phase shift $2 m \theta_M$ rad which is also in the set  $\{m \theta_M\, |\, m \in \mathbb{Z}_M\}$ due to the group property of $\mathbb{Z}_M$. Thus, at best, the set of output states after two applications of the phase shift is a permutation of the set of states after one application of the phase shift, and in general, the set of output states is even less distinguishable since distinct phase shifts may result in the same state after multiple applications even if they do not after a single application. That said, we have not ruled out the efficacy of more complicated strategies, such as the sequential strategy depicted in Fig.~$3b$ of \cite{GLM11} or strategies that adaptively select the input states of later signal modes conditioned on measurement results from earlier modes (see, e.g., \cite{Harrow10}). Optimization over all such strategies appears to be an involved task and will not be considered in this paper.

It is worthwhile to mention a few more features of our problem setup. First, our model of the phase discrimination problem, as given by Eq.~\eqref{output states}, assumes the presence of a phase reference in that we have the ability to prepare pure probe states and not just phase-averaged mixed states. Such a phase reference may be physically implemented, for example, by a separate strong coherent-state beam. Second, though given a phase reference, we may still prepare arbitrary mixed-state probes that are not represented in Eq.~\eqref{output states} but must be considered in a fully general optimization. Nevertheless, it will be convenient to first address the pure-state probe case of Eq.~\eqref{output states} in detail. The mixed state case will be subsequently addressed in Section III.D to extend the optimization to the full state space.

\subsection{Probe state design}

Accordingly, we are now interested in minimizing the error probability over a set of allowed pure probe states $|\psi\rangle_{IS}$. This logically entails two successive minimizations. For a chosen probe state $\ket{\psi}_{IS}$, the minimum average error probability achievable in the sense of Helstrom \cite{Helstrom76} is given by
\begin{align} \label{error probability}
\overline{P}_e = 1 - \frac{1}{M} \max_{\{\hat{E}_m\}}\sum_{m=0}^{M-1} \mathrm{tr} \left(\ket{\psi_m} \brari{\psi_m} \,\hat{E}_m\right),
\end{align}
where $\{\hat{E}_m\}_{m=0}^{M-1}$ is a set of positive semidefinite operators constituting a positive-operator-valued measure (POVM) \cite{Helstrom76} that represents any quantum measurement process that discriminates the $M$ possibilities. The optimization over all POVM's in eq.~\eqref{error probability} corresponds to the choice of the optimum measurement process for a given probe state $\ket{\psi}_{IS}$. We are interested in minimizing \eqref{error probability} further over a set of `allowed' probe states as detailed below. 

Intuitively, it is clear that, if no further restrictions are imposed, one may be able to achieve arbitrarily small error probability by choosing a probe state with sufficiently high energy. For example, we may use a single-mode coherent state probe $\ket{\sqrt{N_S}}$ of average photon number $N_S$, leading to the output states $\{\ket{\sqrt{N_S} e^{im\theta_M}}\}_{m=0}^{M-1}$. As $N_S$ is increased, the output states become more and more orthogonal and the error probability decreases. In the limit $N_S \rightarrow \infty$, we have $\overline{P}_e \rightarrow 0$. Thus, further constraints are required to make the optimization problem meaningful. In the communications literature, an energy constraint is typically imposed in addition to a limit on the number of modes $J$, i.e., the bandwidth. In the present context, an energy constraint is particularly important because it is practically hard to prepare novel quantum states with a large average photon number. One can also imagine scenarios, e.g., that of probing a sensitive biological sample, where a signal energy constraint must be imposed to avoid damaging the sample during probing. In view of the above considerations, we constrain the average total photon number in the signal modes
\begin{align} 
\label{signalenergyconstraint}
\langle\hat{N}_S\rangle  \equiv \left\langle \sum_{j=1}^J \hat{N}_S^{(j)} \right\rangle \leq N_S,
\end{align}
$N_S$ being a given number. The case of a constraint on the average photon number of the signal \emph{plus} idler modes is considered in Section III.C. For brevity, we will simply write `signal energy' for the average total photon number in the signal modes given by the left-hand side of Eq.~(\ref{signalenergyconstraint}).

Using eq.~\eqref{probe}, we calculate the signal energy
\begin{align}
 \langle\hat{N}_S\rangle
   &= \sum_{\mbf{k},\mbf{n}} \left(n_1+\cdots+ n_J\right) \left|c_{\mbf{k},\mbf{n}}\right|^2\\
   &=\sum_{\mbf{n}} \left( (n_1+\cdots+ n_J) \sum_{\mbf{k}} \left|c_{\mbf{k},\mbf{n}}\right|^2 \right) \\
   &\equiv \sum_{\mbf{n}} \,(n_1+\cdots+ n_J)\, p_{\mbf{n}}\\
   &\equiv \sum_{n=0}^{\infty} n \,p_n, \label{signalenergy}
\end{align}
where $p_n$ is the probability that the \emph{total} photon number $ n_1 +\cdots+ n_J$ in
the signal modes is $n$. We denote the signal photon probability distribution by the infinite-dimensional vector $\mathbf{p}= (p_0,p_1, \ldots)$. 

In the sequel, another discrete probability distribution derived from the probe state plays a major role. It depends on both the probe state (through $\mathbf{p}$) and on $M$, as follows. For $\nu \in \mathbb{Z}_M$, let $\textswab{p}\equiv(\textfrak{p}_0, \ldots, \textfrak{p}_\nu,\ldots, \textfrak{p}_{M-1})$ be defined component-wise as
\begin{align} \label{swabpdef}
\textfrak{p}_\nu := \sum_{\begin{array}{c} n :  n \equiv \nu\hspace{1mm}
 (\bmod \hspace{1mm}M) \end{array}} p_n.
\end{align}
In other words, $\textswab{p}$ is the probability distribution induced by $\mathbf{p}$ on the modulo-$M$ congruence classes of the total signal photon number.

\section{Symmetric phase discrimination: problem solution}

\subsection{Pure probe states}

It will be revealing to approach the problem set up in Section II in stages. We will comment as we go along on the implications of our results and their connections to the literature. As in Section II, we continue to assume a pure probe state and address the use of mixed probe states in Section III.D.   

We first present the basic result that, without invoking a signal energy constraint or specifying $J$ and $J'$, we can partition the probe state space into equivalence classes of states having the same error probability.
\begin{thm} \label{Thm:1}
Pure probe states with the same $\textswab{p}$ have the same performance in the discrimination of $M$ symmetric phases. This statement encompasses probes with differing $J$ and/or $J'$.
\begin{proof}
For an arbitrary probe state $\ketsi{\psi}$ written in the form of Eq.~\eqref{probe}, the corresponding output states $\{\ketri{\psi_m}\}_{m=0}^{M-1}$ are given by
\begin{align}
\ketri{\psi_m} =  \sum_{\mbf{k},\mbf{n}} c_{\mbf{k},\mbf{n}} e^{im\theta_M(n_1 + \cdots + n_J)}\keti{\mbf{k}} \ketr{\mbf{n}}.
\end{align}
Consider the $M \times M$ matrix $\mbf{G}$ (the \emph{Gram matrix}) whose elements are all the mutual inner products between the $\{\ketri{\psi_m}\}$, i.e., 
\begin{align}
G_{m\,m'} := \braketir{\psi_m}{\psi_{m'}}.
\end{align}
The minimum error probability \eqref{error probability} in discriminating the symmetric set of pure states $\{\ketri{\psi_m}\}_{m=0}^{M-1}$ is a function of the elements of $\mathbf{G}$ alone \cite{footnote2}. 
We compute the general element of $\mathbf{G}$ to be
\begin{align} G_{m\,m'} =& \braketir{\psi_m}{\psi_{m'}}\\
=& \sum_{\mbf{k},\mbf{n}}  \left|c_{\mbf{k},\mbf{n}}\right|^2\, e^{-i\theta_M(m-m') (n_1+\cdots+n_J)} \\
=& \sum_{\mbf{n}}  p_{\mbf{n}} \,e^{-i\theta_M(m-m') (n_1+\cdots+n_J)}\\
=& \sum_{n=0}^\infty p_n \, e^{-i\theta_M(m-m') n}  \\
=& \sum_{\nu=0}^{M-1} \textfrak{p}_\nu\, e^{-i\theta_M(m-m') \nu} . \label{grammatrixelement}
\end{align}
The equality \eqref{grammatrixelement} follows because, for any $m$ and $m'$, the exponential factor is
periodic in $n$ with period $M$.  We have thus shown that the Gram matrix, and thence the error performance, is a function of just the $M$ components of $\textswab{p}$, and that this is true irrespective of the values of $J$ and $J'$. 
\end{proof}
\end{thm}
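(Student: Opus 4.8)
The plan is to reduce the minimum error probability to a function of the Gram matrix $\mathbf{G}$ of the output states $\{\ketri{\psi_m}\}$, and then to show that $\mathbf{G}$ is completely fixed by $\textswab{p}$, with no residual dependence on $J$, $J'$, or the finer structure of the probe. The one conceptually nontrivial input is the standard fact that, for an equiprobable ensemble of pure states, the Helstrom minimum error probability in Eq.~\eqref{error probability} depends on the states only through their matrix of mutual overlaps $G_{mm'}=\braketir{\psi_m}{\psi_{m'}}$. I would justify this by recalling that two ordered families of vectors sharing the same Gram matrix are related by a linear isometry between their spans (extendable to a unitary, or to a partial isometry when the ambient spaces differ in dimension), and that conjugating every state and every POVM element by such a map preserves each trace $\tr\left(\ketri{\psi_m}\brari{\psi_m}\hat{E}_m\right)$ while sending POVMs to POVMs; hence the maximization in Eq.~\eqref{error probability} returns the same value for any two such ensembles. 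Because the Fock spaces here are infinite-dimensional, the $M\times M$ matrix $\mathbf{G}$ (of rank at most $M$) is always realizable, so probes with differing numbers of signal and idler modes are automatically swept into the same comparison.

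With that in hand, the next step is a direct computation of $\mathbf{G}$. Expanding the probe as in Eq.~\eqref{probe} and applying $\hat{I}_I\otimes\hat{U}_m$, the component $\keti{\mbf{k}}\ketr{\mbf{n}}$ picks up the scalar factor $e^{im\theta_M(n_1+\cdots+n_J)}$. Forming $\braketir{\psi_m}{\psi_{m'}}$ and using orthonormality of the multimode Fock basis annihilates all cross terms in $(\mbf{k},\mbf{n})$, leaving $\sum_{\mbf{k},\mbf{n}}|c_{\mbf{k},\mbf{n}}|^2\,e^{-i\theta_M(m-m')(n_1+\cdots+n_J)}$; collapsing the sum over $\mbf{k}$ and then grouping the $\mbf{n}$-sum by total signal photon number yields $\sum_{n=0}^{\infty}p_n\,e^{-i\theta_M(m-m')n}$, i.e.\ $\mathbf{G}$ is determined by the signal photon-number distribution $\mbf{p}$.

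The final step is periodicity: since $\theta_M=2\pi/M$, the exponential $e^{-i\theta_M(m-m')n}$ is invariant under $n\mapsto n\bmod M$, so sorting the $n$-sum into residue classes modulo $M$ and invoking the definition \eqref{swabpdef} of $\textfrak{p}_\nu$ gives $G_{mm'}=\sum_{\nu=0}^{M-1}\textfrak{p}_\nu\,e^{-i\theta_M(m-m')\nu}$. Thus $\mathbf{G}$, and therefore the minimum error probability, is a function of $\textswab{p}$ alone, which is exactly the claim. I expect the only real obstacle to be making the first step airtight — the reduction to the Gram matrix — but this is a known property of (symmetric, or indeed arbitrary) pure-state ensembles rather than something new, and the remaining two steps are just bookkeeping with orthonormal bases and a periodic phase. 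As a useful byproduct, the resulting $\mathbf{G}$ is circulant, so its eigenvalues are the discrete Fourier transform of $\textswab{p}$; this is what should make the subsequent explicit optimization over $\textswab{p}$ manageable.
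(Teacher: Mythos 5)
Your proposal is correct and follows essentially the same route as the paper: reduce the minimum error probability to a function of the Gram matrix, compute $G_{mm'}$ in the multimode Fock basis, and use periodicity of $e^{-i\theta_M(m-m')n}$ in $n$ to collapse the sum onto the residue classes defining $\textswab{p}$. Your explicit isometry argument for why the error probability depends only on the Gram matrix is a welcome fleshing-out of what the paper relegates to a footnote, but it is the same underlying fact, not a different approach.
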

The result of Theorem $1$ is interesting for a number of reasons. First, it clusters probe states into classes with the same error performance based on the easily computed characteristic $\textswab{p}$. Second, since one can always prepare a signal-only ($J'=0$) probe with a given $\textswab{p}$, the ancillary idler modes shown in Fig.~2 do not improve performance. This is unlike the typical situation in which ancillary entanglement in the probe helps in distinguishing $M$ unitary transformations \cite{D'Ariano01}. Third, since any given $\textswab{p}$ can be realized using a single-mode signal state (i.e., with $J=1$ as in Theorem $2$ below), no performance gain accrues from using multiple signal modes. This can again be contrasted with the situation of distinguishing two general finite-dimensional unitaries, for which multiple applications of the unitaries can result in error-free discrimination \cite{Acin01}. We mention that these latter two implications of Theorem 1 also follow from a general lossless image sensing result of ref.~\cite{NairYen11} (see section on lossless imaging therein). Finally, the freedom of probe state choice allowed by Theorem 1 will turn out (in Section IV) to be crucial to a practical implementation of the $M=2$ case.

Our next result implies that, when the signal energy is constrained as in \eqref{signalenergyconstraint} to a maximum of $N_S$, the most efficient way to use the energy is to use a probe state with $\mathbf{p}$ supported on just its first $M$ components. It is also shown that, beyond a threshold signal energy of $(M-1)/2$, discrimination with \emph{zero error} is possible.
\begin{thm} 
\label{Thm:2} 
 \begin{enumerate}
 [(a)] \item For $N_S <(M-1)/2$, a single-mode probe state of the form $\kets{\psi}=\sum_{\nu=0}^{M-1} \sqrt{\textfrak{p}_{\nu}}\,\kets{\nu}$ with $\textfrak{p}_{\nu} \geq 0$ achieves the minimum error probability. 

\item For $N_S\geq (M-1)/2$, the uniform superposition state
$\kets{\psi}= \frac{1}{\sqrt{M}}\left(\kets{0} + \cdots + \kets{M-1}\right)$ achieves perfect discrimination. 
\end{enumerate}
\begin{proof}
(\textit{a}) Presented with any probe state $\ketsi{\psi}$ with associated $\mathbf{p}$ and $\textswab{p}$, we can construct the
single-mode probe
\begin{align}
\label{singlemodepsi}
  \kets{\psi} = \sum_{\nu=0}^{M-1} \sqrt{\textfrak{p}_\nu}\,\kets{\nu}.
\end{align}
This state has the same $\textswab{p}$, so by Theorem 1 it has the same performance as the original one.
Moreover, because the probabilities $p_n$ that $\ketsi{\psi}$  associates with $n \geq M$ all contribute to photon numbers less than $M$ in $\kets{\psi}$, the total signal energy in \eqref{singlemodepsi}, as given by Eq.~\eqref{signalenergy}, can only be equal to or lower than that of the original probe state.
Thus, \eqref{singlemodepsi} provides the same performance at equal or lesser signal energy than
any state with the same $\textswab{p}$.
\\\\
\indent (\textit{b}) If $N_S\geq (M-1)/2$, consider the probe state
\begin{align}
\label{zeroerrorstate}
  \kets{\psi} = \frac{1}{\sqrt{M}} \sum_{\nu=0}^{M-1} \kets{\nu},
\end{align}
which has energy $(M-1)/2$ and is therefore allowed by the energy constraint. From \eqref{grammatrixelement}, we see that 
$G_{mm'} = \delta_{m,m'}$  so that the output states are mutually orthogonal and the error probability
is zero.
\end{proof}
\end{thm}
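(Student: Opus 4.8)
The plan is to reduce everything to Theorem~\ref{Thm:1}, which makes the error performance a function of the $M$-vector $\textswab{p}$ alone, and then to use the explicit Gram-matrix formula~\eqref{grammatrixelement}. For part~(\textit{a}), I would take an arbitrary admissible probe $\ketsi{\psi}$, read off its signal photon distribution $\mathbf{p}$ and the induced $\textswab{p}$, and replace it by the single-mode state $\kets{\psi}=\sum_{\nu=0}^{M-1}\sqrt{\textfrak{p}_\nu}\,\kets{\nu}$. Two facts then do the work. First, this state has the very same $\textswab{p}$ — its total signal photon number lives in $\{0,\dots,M-1\}$, so its modulo-$M$ class distribution is just its photon distribution, which is $\textswab{p}$ by construction — and hence, by Theorem~\ref{Thm:1}, identical error performance. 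Second, by Eq.~\eqref{signalenergy} its signal energy is
\begin{equation*}
\sum_{\nu=0}^{M-1}\nu\,\textfrak{p}_\nu=\sum_{\nu=0}^{M-1}\nu\!\!\sum_{n\equiv\nu\ (M)}\!\!p_n\ \le\ \sum_{\nu=0}^{M-1}\ \sum_{n\equiv\nu\ (M)}\!\!n\,p_n=\sum_{n=0}^{\infty}n\,p_n=\langle\hat{N}_S\rangle,
\end{equation*}
since $0\le\nu\le n$ whenever $n\equiv\nu\pmod{M}$. Thus every performance point reachable under $\langle\hat{N}_S\rangle\le N_S$ is reachable by a state of the stated form that still obeys the constraint, so the overall minimum equals the minimum over this family; I would then note it is attained, because the feasible set $\{\textswab{p}:\textfrak{p}_\nu\ge0,\ \sum_\nu\textfrak{p}_\nu=1,\ \sum_\nu\nu\,\textfrak{p}_\nu\le N_S\}$ is compact in $\mathbb{R}^M$ and $\overline{P}_e$ is continuous in $\mbf{G}$ (the dependence used via~\cite{footnote2} in Theorem~\ref{Thm:1}), hence in $\textswab{p}$.

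For part~(\textit{b}), I would simply exhibit the uniform superposition $\kets{\psi}=\frac{1}{\sqrt{M}}\sum_{\nu=0}^{M-1}\kets{\nu}$. Its signal energy is $\frac{1}{M}\sum_{\nu=0}^{M-1}\nu=(M-1)/2\le N_S$, so it is admissible, and substituting $\textfrak{p}_\nu=1/M$ into~\eqref{grammatrixelement} leaves the finite geometric sum $G_{mm'}=\frac{1}{M}\sum_{\nu=0}^{M-1}e^{-i\theta_M(m-m')\nu}$, which equals $1$ for $m=m'$ and, because $e^{-iM\theta_M(m-m')}=1$ while $e^{-i\theta_M(m-m')}\ne1$ for $m\not\equiv m'\pmod{M}$, vanishes otherwise. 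Hence $\mbf{G}=\hat{I}$, the output states $\{\ketri{\psi_m}\}$ are orthonormal, and the projective measurement onto them achieves $\overline{P}_e=0$.

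The only genuine subtlety is the attainment step in part~(\textit{a}): Theorem~\ref{Thm:1} together with the energy inequality only show the $\textswab{p}$-family is rich enough to \emph{match} every achievable performance, not that an optimizer lies inside it, so one really must invoke compactness of the feasible $\textswab{p}$-set and continuity of $\overline{P}_e$ in $\mbf{G}$. It is also worth recording why the hypothesis $N_S<(M-1)/2$ makes part~(\textit{a}) the operative regime: inverting the discrete Fourier transform in~\eqref{grammatrixelement} shows the uniform $\textswab{p}$ is the \emph{only} distribution giving $\mbf{G}=\hat{I}$, so zero-error discrimination is impossible below signal energy $(M-1)/2$, and parts~(\textit{a}) and~(\textit{b}) jointly cover all $N_S$.
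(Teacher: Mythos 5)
Your proposal is correct and takes essentially the same route as the paper: part~(\textit{a}) replaces an arbitrary probe by the single-mode state with the same $\textswab{p}$, invokes Theorem~1 for equal performance and observes the signal energy can only decrease, and part~(\textit{b}) exhibits the uniform superposition whose Gram matrix reduces to the identity via the geometric sum in Eq.~\eqref{grammatrixelement}. Your extra compactness/continuity remark on attainment (and the uniqueness of uniform $\textswab{p}$ for zero error) is a correct refinement not spelled out in the paper's proof of this theorem -- the paper effectively settles attainment through the explicit construction in Theorem~3 -- but it does not alter the approach.
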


The conclusion of Theorem $2(b)$ that discrimination with zero error is possible whenever $N_S \geq (M-1)/2$ is remarkable. In a communications framework, it implies that if a signal energy of at least $(M-1)/2$ is available, we can communicate one of $M$ messages without error using phase modulation of the state \eqref{zeroerrorstate}. This conclusion was noted in \cite{HallFuss91}, although this, or any other, single-mode state cannot achieve error-free communication under the restricted class of measurements allowed in \cite{HallFuss91}. In \cite{Shapiro93}, an alternative scheme using two signal modes suffering conjugate phase shifts of $\theta_m$ and $-\theta_m$ respectively was proposed that achieves the same end. 

Recently, it was shown in \cite{Hirota11} that using the probe state 
\begin{align} 
\label{ECS}
    \ket{\psi}_{ECS} 
  = \frac{1}{\mathcal{N}}\left(\keti{\alpha}\kets{\alpha}-\keti{{-\alpha}}\kets{{-\alpha}} \right), 
\end{align}
for reading a binary phase-encoded memory (so that the output states correspond to Eq.~\eqref{output states} with $M=2$) results in zero-error discrimination. Here, $\kets{\pm \alpha}$ and $\keti{\pm \alpha}$ are coherent states (with $\alpha \neq 0$ but otherwise arbitrary) and the normalization factor $\mathcal{N} = \sqrt{2 (1 - e^{-4|\alpha|^2})}$. The state \eqref{ECS} is an example of an \emph{entangled coherent state} (ECS) \cite{ECS}. 

The existence of several states allowing zero-error discrimination is in itself not surprising if we apply Theorem 1 to the state \eqref{zeroerrorstate}, because according to that theorem, any state with uniform $\textswab{p} = \left(1/M, \ldots, 1/M \right)$ must provide zero-error discrimination and we can clearly write down an infinite number of states with uniform $\textswab{p}$. A more interesting question is whether there exist states with nonuniform $\textswab{p}$, or with signal energy less than $(M-1)/2$, that also allow zero-error discrimination. Theorem $3(b)$ below implies that the answer to both questions is negative, so that a signal energy of at least $(M-1)/2$ is a \emph{necessary condition} for zero-error discrimination. In this connection, it may be verified that the ECS of Eq.~\eqref{ECS} has $\textswab{p} = (1/2,1/2)$ and signal energy $|\alpha|^2/\left(2 \tanh |\alpha|^2\right) > 1/2$ for $|\alpha| > 0$ \cite{footnoteECS}.

The next result gives the form of the optimal probe state for a signal energy constraint
$N_S < (M-1)/2$ and also shows that the only probe states achieving zero-error discrimination have uniform $\textswab{p}$, and therefore have signal energy greater than or equal to $(M-1)/2$.

\begin{thm}
\begin{enumerate}[(a)] \item Among all probe states satisfying 
$\langle\hat{N}_S\rangle\leq N_S < (M-1)/2$, the minimum error probability is achieved by the state 
\begin{align} 
\label{opt}
\ket{\psi}_{\mathrm{opt}} = \sum_{\nu=0}^{M-1} \sqrt{\textfrak{p}_{\nu}}\,\kets{\nu},
\end{align}
with $\textswab{p}$ given by 
\begin{align}
  \textfrak{p}_\nu = \frac{1}{\left(A + \nu B\right)^2}, \;\nu \in \mathbb{Z}_M, 
\end{align}
where $A$, $B$ are positive constants chosen to satisfy the constraints
\begin{equation}
\sum_{\nu=0}^{M-1} \textfrak{p}_{\nu} = 1,
   \quad
\sum_{\nu=0}^{M-1} \nu\, \textfrak{p}_{\nu} = N_S.
\end{equation}
\item Any probe state achieving zero-error discrimination must have $\textswab{p} = (1/M, \ldots, 1/M)$ and signal energy greater than or equal to $(M-1)/2$.
 \end{enumerate}

\end{thm}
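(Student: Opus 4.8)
The plan is to reduce the optimization over all probe states to a finite-dimensional classical problem and then solve it. By Theorem~\ref{Thm:1} the error probability of any probe depends only on its distribution $\textswab{p}$, and by Theorem~\ref{Thm:2}(a) every $\textswab{p}$ with $\textfrak{p}_\nu\ge0$, $\sum_\nu\textfrak{p}_\nu=1$ and $\sum_{\nu=0}^{M-1}\nu\,\textfrak{p}_\nu\le N_S$ is realized by the single-mode state $\sum_\nu\sqrt{\textfrak{p}_\nu}\,\kets{\nu}$ at signal energy $\le N_S$, while conversely any probe with $\langle\hat{N}_S\rangle\le N_S$ has $\sum_\nu\nu\,\textfrak{p}_\nu\le N_S$ because $\langle\hat{N}_S\rangle=\sum_n n\,p_n\ge\sum_n (n\bmod M)\,p_n=\sum_\nu\nu\,\textfrak{p}_\nu$. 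Hence the feasible set of $\textswab{p}$ is exactly the compact convex polytope $\Delta:=\{\textswab{p}:\textfrak{p}_\nu\ge0,\ \sum_\nu\textfrak{p}_\nu=1,\ \sum_\nu\nu\,\textfrak{p}_\nu\le N_S\}$. Diagonalizing the circulant Gram matrix of Eq.~\eqref{grammatrixelement} (its eigenvalues are $\lambda_k=M\,\textfrak{p}_k$, $k\in\mathbb{Z}_M$) and invoking the standard minimum-error expression for a symmetric set of pure states \cite{Ban97}, one obtains $\overline{P}_e=1-\frac{1}{M}\bigl(\sum_{\nu=0}^{M-1}\sqrt{\textfrak{p}_\nu}\bigr)^2$. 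Thus both parts reduce to studying the maximization of the strictly concave functional $f(\textswab{p}):=\sum_{\nu=0}^{M-1}\sqrt{\textfrak{p}_\nu}$ over $\Delta$.

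\emph{Part (a).} Take $0<N_S<(M-1)/2$ (the case $N_S=0$ is trivial, with the single feasible $\textswab{p}=(1,0,\dots,0)$). Strict concavity makes the maximizer $\textswab{p}^*$ unique; it must make the energy constraint active, since otherwise $\textswab{p}^*$ would maximize $f$ over the whole simplex and hence equal the uniform vector, whose energy $(M-1)/2$ exceeds $N_S$. The crux is to show $\textfrak{p}^*_\nu>0$ for every $\nu$: this is where the infinite slope of $\sqrt{\,\cdot\,}$ at the origin does the work. If some component vanished I would exhibit a feasible perturbation direction $d$ (i.e.\ $\sum_\nu d_\nu=0$, $\sum_\nu\nu\,d_\nu\le0$, and $d_\nu\ge0$ wherever $\textfrak{p}^*_\nu=0$) with $d_{\nu_0}>0$ at a vanishing index $\nu_0$; then the one-sided directional derivative of $f$ along $d$ is $+\infty$, contradicting optimality. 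Concretely: (i) $\operatorname{supp}\textswab{p}^*$ must be a prefix $\{0,\dots,k\}$, since if an index $\nu'<\max\operatorname{supp}\textswab{p}^*$ were missing, $d=e_{\nu'}-e_{\max\operatorname{supp}}$ works; (ii) a prefix with $k\le M-2$ cannot be optimal — for $k\ge1$, leaking mass into index $k+1$ while rebalancing indices $0$ and $k$ (e.g.\ $d_{k+1}=1,\ d_0=1/k,\ d_k=-(k+1)/k$) preserves both constraints and again yields such a $d$, while $k=0$ is ruled out by the active energy constraint. Hence $k=M-1$. With interiority in hand, the Karush--Kuhn--Tucker stationarity conditions give $1/(2\sqrt{\textfrak{p}^*_\nu})=\lambda+\mu\nu$ for all $\nu$, i.e.\ $\textfrak{p}^*_\nu=1/(A+\nu B)^2$ with $A=2\lambda$, $B=2\mu$, where $A,B$ are fixed by the two equality constraints $\sum_\nu\textfrak{p}^*_\nu=1$, $\sum_\nu\nu\,\textfrak{p}^*_\nu=N_S$. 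Positivity follows: $\mu\ge0$ as the multiplier of an inequality, and $\mu=0$ would force $\textswab{p}^*$ uniform (energy $\ne N_S$), so $B>0$; and $\sqrt{\textfrak{p}^*_0}=1/(2\lambda)>0$ forces $\lambda>0$, so $A>0$, whence $A+\nu B>0$ for all $\nu\in\mathbb{Z}_M$. This is Eq.~\eqref{opt}.

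\emph{Part (b).} Zero error means $f(\textswab{p})=\sqrt{M}$. By Cauchy--Schwarz, $\sum_{\nu=0}^{M-1}\sqrt{\textfrak{p}_\nu}=\sum_\nu 1\cdot\sqrt{\textfrak{p}_\nu}\le\sqrt{M}\,\bigl(\sum_\nu\textfrak{p}_\nu\bigr)^{1/2}=\sqrt{M}$, with equality iff all $\textfrak{p}_\nu$ are equal, i.e.\ $\textswab{p}=(1/M,\dots,1/M)$. For the energy bound, since $n\ge n\bmod M$ for every $n\ge0$, $\langle\hat{N}_S\rangle=\sum_n n\,p_n\ge\sum_\nu\nu\,\textfrak{p}_\nu=\frac1M\sum_{\nu=0}^{M-1}\nu=\frac{M-1}{2}$.

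The step I expect to be the real obstacle is the interiority claim in part (a): the clean Lagrange form $1/(A+\nu B)^2$ is immediate only once one knows the optimizer lies in the relative interior of the probability simplex, and ruling out boundary optima requires the slightly delicate feasible-direction argument above — in particular the largest index $\nu=M-1$ must be treated with care, since one cannot naively ``make room'' for it within the energy budget by transfers from lower indices. Everything else — the Gram-matrix diagonalization, the reduction to maximizing $f$, and all of part (b) — is routine.
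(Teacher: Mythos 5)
Your proposal is correct and follows the paper's route in its essentials: reduce to a single-mode state via Theorems~\ref{Thm:1} and \ref{Thm:2}(a), identify the SRM eigenvalues with $M\textswab{p}$ so that minimizing $\overline{P}_e$ of Eq.~\eqref{Peformula} becomes maximizing the concave functional $\sum_{\nu}\sqrt{\textfrak{p}_\nu}$ under the normalization and energy constraints, and read off the form $\textfrak{p}_\nu = 1/(A+\nu B)^2$ from stationarity; part (b) is the same equality-case analysis in both treatments. Where you genuinely diverge is in the bookkeeping around the constrained optimum: the paper first solves the \emph{equality}-constrained problem by Lagrange multipliers, invokes concavity to upgrade the stationary point to a global maximum, and then uses a separate Lemma (the line-segment/monotonicity argument, Lemma 1) to show the inequality constraint $\langle\hat{N}_S\rangle \leq N_S$ loses nothing; you instead work with the inequality-constrained polytope directly, argue the energy constraint must be active (else the maximizer would be the uniform vector, whose energy exceeds $N_S$), and prove full support of the maximizer by feasible-direction perturbations exploiting the infinite slope of $\sqrt{\cdot}$ at the origin before applying KKT. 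Your version buys a more self-contained existence-and-interiority argument (the paper essentially asserts that positive $A,B$ solving the constraint equations exist and that the resulting point is interior), at the cost of the more delicate support/prefix analysis; the paper's version is shorter because concavity makes any feasible stationary point globally optimal without needing interiority a priori, but it outsources the inequality-versus-equality issue to Lemma 1. Both are sound, and your treatment of part (b) (Cauchy--Schwarz plus $n \geq n \bmod M$) matches the paper's claim with slightly more explicit detail.
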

\begin{proof}
(\textit{a}) By Theorem $2(a)$, it suffices to consider a single-mode probe state of the form \eqref{singlemodepsi}. As shown in Section II, the output states $\{\ketri{\psi_m}\}_{0}^{M-1}$ form a symmetric set in the sense of having equal a priori probabilities and satisfying Eqs.~(\ref{symmetric set condition.I}) and (\ref{symmetric set condition.II}). It was shown in Refs.~\cite{Belavkin75,Ban97} that the Square-Root Measurement is the minimum error probability measurement for any symmetric pure-state set. 
An explicit formula exists for this minimum error probability \cite{Belavkin75,Ban97,Kato99}. We use the following expression from ref.~\cite{Kato99}:-  
\begin{align}
\label{Peformula}
  \overline{P}_e = 1 - \frac{1}{M^2}\left(\sum_{m=0}^{M-1} \sqrt{\lambda_m}\right)^2,
\end{align}
where $\boldsymbol{\lambda}=(\lambda_0, \ldots, \lambda_{M-1})$ is the vector of eigenvalues
of the Gram matrix $\mathbf{G}$ \cite{footnote3}. The ordered vector of eigenvalues $\boldsymbol{\lambda}$ is specified by the formula (cf. Eq.~(42) of \cite{Kato99})
\begin{align} \label{lambda}
  \lambda_{m'}
   &= \sum_{m=0}^{M-1} \braketir{\psi_0}{\psi_m}\; \cdot e^{-i m'm\theta_M} \\
   &= \sum_{m=0}^{M-1} G_{0m}\, e^{-i m'm\theta_M}
      \label{eigvalue2}.
\end{align}
Note that the eigenvalues only depend on the first row of the Gram matrix -- indeed, the
symmetric set property of the $\{\ketri{\psi_m}\}$ guarantees that the remaining rows are
obtained by cyclically shifting the first. While the ordering of eigenvalues in $\boldsymbol{\lambda}$ enables writing down the compact formula \eqref{lambda} and has a physical interpretation that will appear, note that the error probability itself does not depend on the ordering. 

We may rewrite \eqref{eigvalue2} as
\begin{align} \label{eqn:lambda}
\boldsymbol{\lambda} = \mathcal{F}\left[ \mathbf{G}_0\right],
\end{align} where $\mathcal{F}$ is the Discrete Fourier Transform (DFT) on $\mathbb{Z}_M$ 
and $\mathbf{G}_0 \equiv \{G_{0m}\}$ is  the first row of the Gram matrix. On the other 
hand, \eqref{grammatrixelement} implies that 
\begin{align}
\mathbf{G}_0 = M \cdot\mathcal{F}^{-1} \left[\textswab{p}\right],
\end{align}
where $\mathcal{F}^{-1}$ is the inverse DFT. We therefore have
\begin{align}
  \boldsymbol{\lambda} = M\,\textswab{p},
\end{align}
which gives a physical interpretation for  $\boldsymbol{\lambda}$. 

Minimizing $\overline{P}_e$ for states of the form \eqref{singlemodepsi} with 
signal energy $\langle\hat{N}_S\rangle = N_S$ is then equivalent 
to maximizing the concave function $\sum_{\nu=0}^{M-1} \sqrt{\textfrak{p}_{\nu}}$ over 
the convex set of $\textswab{p}$'s for which the probability normalization constraint
\begin{equation}
\sum_{\nu=0}^{M-1} \textfrak{p}_{\nu} = 1
\end{equation}
and the signal energy constraint
\begin{equation}
\sum_{\nu=0}^{M-1} \nu \, \textfrak{p}_{\nu} =  N_S
\end{equation}
are satisfied. Following the usual Lagrange multiplier method, we define
\begin{align}
  & F(\textfrak{p}_{0},\ldots,\textfrak{p}_{M-1}, A,B) \nonumber \\
  &=  \sum_{\nu=0}^{M-1} \sqrt{\textfrak{p}_{\nu}} - A\left(\sum_{\nu=0}^{M-1} \textfrak{p}_{\nu} - 1\right)
     - B\left(\sum_{\nu=0}^{M-1} \nu\,   \textfrak{p}_{\nu} - N_S\right),
\end{align}
and solve the equation $\nabla F = 0$.   The solution is
\begin{align}
\label{eqn:optimal}
 \textfrak{p}_{\nu} = \frac{1}{(A+\nu B)^2}, \;\;\nu \in \mathbb{Z}_M,
\end{align}
where $A, B$ are chosen such that $\sum_{\nu=0}^{M-1} \textfrak{p}_{\nu} = 1$ and
$\sum_{\nu=0}^{M-1} \nu \, \textfrak{p}_{\nu} = N_S$. The point $\textswab{p}$ defined 
by \eqref{eqn:optimal} is an interior point of the domain of optimization and a local 
maximum by the gradient condition. Since the function being maximized is concave, it is 
also a global maximum on the domain \cite{BoydVandenberghe04}. Thus, the state 
\eqref{eqn:optimal} achieves the minimum error probability among  probe states with energy exactly $N_S$.
Lemma 1, below, establishes that the state \eqref{eqn:optimal} is also optimal 
under the inequality constraint \eqref{signalenergyconstraint}.
\\\\
\indent(\textit{b}) From \eqref{Peformula}, it is evident that $\overline{P}_e=0$ 
if only if $\sum_{\nu=0}^{M-1} \sqrt{\textfrak{p}_{\nu}} = \sqrt{M}$.  It is easy to verify that the maximum value of the quantity $\sum_{\nu=0}^{M-1} \sqrt{\textfrak{p}_{\nu}}$ under just the constraint $\sum_{\nu=0}^{M-1} \textfrak{p}_{\nu} =1$ is $\sqrt{M}$ and is achieved only if $\textfrak{p}_\nu = 1/M, \; 0 \leq \nu \leq M-1$. Thus, the states providing zero-error discrimination are exactly those with uniform $\textswab{p}$. Such a state has signal energy at least $(M-1)/2$. 
\end{proof}
 
The following result may be expected on physical grounds (although it does 
not hold for $N_S > (M - 1)/2$), but we need a proof because this fact is used to complete the proof of Theorem 2(a).
\begin{lem} 
The optimum single-mode probe state of the form \eqref{singlemodepsi} under the inequality constraint 
$\langle \hat{N}_S \rangle \leq N_S < (M - 1)/2$ is the same as the 
optimum state under the equality constraint 
$\langle \hat{N}_S \rangle = N_S$.
\begin{proof} 
Consider the maximization of  $\sum_{\nu=0}^{M-1} \sqrt{\textfrak{p}_{\nu}}$ under the probability constraint  $\sum_{\nu=0}^{M-1} {\textfrak{p}_{\nu}}  = 1$. As shown in the proof of Theorem $3 (b)$, the maximum is achieved at $ \textswab{p}^* = (1/M, \ldots, 1/M)$. Denote the $\textswab{p}$ of an optimal state under the inequality constraint by $\textswab{p}_*$. Denote its signal energy by $N_s \leq N_S$. Since the set of all $\textswab{p}$ is convex, the line segment $L$ joining $\textswab{p}_*$ 
to $\textswab{p}^*$  in that set consists of allowed $\textswab{p}$'s. Since the signal energy is a linear function of $\textswab{p}$, $L$ contains states of signal energy ranging from $N_s$ to $(M-1)/2$. Further, since the function $\sum_{\nu=0}^{M-1} \sqrt{\textfrak{p}_{\nu}}$ is a concave function of $\textswab{p}$ whose maximum is attained at $\textswab{p}^*$, the function must be nondecreasing as we move along $L$ from $\textswab{p}_*$ to $\textswab{p}^*$ \cite{BoydVandenberghe04}. In particular, we can find a state on $L$ with signal energy $N_S$ and equal or better performance than that obtainable from $\textswab{p}_*$. Consequently, there is an optimal state under the inequality constraint with signal energy exactly $N_S$. One such state must be that given by \eqref{eqn:optimal}.
\end{proof}
\end{lem}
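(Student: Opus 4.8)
The plan is to treat this as a finite-dimensional convex program: maximize the strictly concave function $f(\textswab{p}) := \sum_{\nu=0}^{M-1}\sqrt{\textfrak{p}_{\nu}}$ over the simplex $\Delta := \{\textswab{p}:\textfrak{p}_{\nu}\ge 0,\ \sum_{\nu}\textfrak{p}_{\nu}=1\}$, and to compare its maximizer over the slab $\cl{P}_{\le} := \{\textswab{p}\in\Delta : g(\textswab{p})\le N_S\}$ with its maximizer over the slice $\cl{P}_{=} := \{\textswab{p}\in\Delta : g(\textswab{p})=N_S\}$, where $g(\textswab{p}) := \sum_{\nu}\nu\,\textfrak{p}_{\nu}$ is the (affine) signal energy. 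First I would record the routine structural facts: $\cl{P}_{\le}$ and $\cl{P}_{=}$ are compact, convex, and nonempty ($\cl{P}_{\le}$ contains $\textswab{p}=(1,0,\ldots,0)$, and $\cl{P}_{=}$ is nonempty since $0\le N_S<M-1$), and $f$ is continuous and strictly concave, so each problem has a \emph{unique} maximizer --- call them $\textswab{p}_*$ over $\cl{P}_{\le}$ and $\textswab{p}^\sharp$ over $\cl{P}_{=}$ --- the latter being the state \eqref{eqn:optimal} by the Lagrange computation already carried out in the proof of Theorem 3(a). Since $\cl{P}_{=}\subseteq\cl{P}_{\le}$, the Lemma ($\textswab{p}_*=\textswab{p}^\sharp$) reduces to the single claim: \emph{some} maximizer over $\cl{P}_{\le}$ has signal energy exactly $N_S$. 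Indeed, such a point lies in $\cl{P}_{=}$ and dominates $f$ there (it already dominates $f$ on the larger set $\cl{P}_{\le}\supseteq\cl{P}_{=}$), so it is $\cl{P}_{=}$-optimal, hence equals $\textswab{p}^\sharp$; and by uniqueness of the maximizer over $\cl{P}_{\le}$ it also equals $\textswab{p}_*$.

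Next I would handle the trivial case $g(\textswab{p}_*)=N_S$ and then assume $N_s:=g(\textswab{p}_*)<N_S$. The key device is the segment joining $\textswab{p}_*$ to the uniform distribution $\textswab{p}^* := (1/M,\ldots,1/M)$, which --- as established in the proof of Theorem 3(b) --- is the maximizer of $f$ over \emph{all} of $\Delta$ and has signal energy $(M-1)/2>N_S$. Along $\textswab{p}(t):=(1-t)\textswab{p}_*+t\,\textswab{p}^*$, $t\in[0,1]$, which stays in $\Delta$ by convexity, the energy $g(\textswab{p}(t))$ is affine and strictly increasing from $N_s$ to $(M-1)/2$; since $N_s<N_S<(M-1)/2$, the intermediate value theorem gives a (unique) $t_0\in(0,1)$ with $g(\textswab{p}(t_0))=N_S$. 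Writing $\textswab{q}:=\textswab{p}(t_0)$, I then only need $f(\textswab{q})\ge f(\textswab{p}_*)$: since $\textswab{q}\in\cl{P}_{\le}$ this forces $f(\textswab{q})=f(\textswab{p}_*)$, so $\textswab{q}$ is a maximizer over $\cl{P}_{\le}$ of energy exactly $N_S$, which is precisely the claim above.

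The inequality $f(\textswab{q})\ge f(\textswab{p}_*)$ I would get from concavity along the segment. The function $\phi(t):=f(\textswab{p}(t))$ is concave on $[0,1]$, being a concave function composed with an affine map, and it attains its maximum over $[0,1]$ at the right endpoint $t=1$, because $\textswab{p}^*$ maximizes $f$ over the whole simplex and so over every point of the segment. A concave function on $[0,1]$ whose maximum is attained at the right endpoint is nondecreasing on $[0,1]$: for $0\le s<t\le 1$, write $t=\lambda s+(1-\lambda)\cdot 1$ with $\lambda=(1-t)/(1-s)\in[0,1)$; concavity gives $\phi(t)\ge\lambda\phi(s)+(1-\lambda)\phi(1)\ge\lambda\phi(s)+(1-\lambda)\phi(t)$, hence $\phi(t)\ge\phi(s)$. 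Taking $s=0$, $t=t_0$ yields $f(\textswab{q})=\phi(t_0)\ge\phi(0)=f(\textswab{p}_*)$.

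The one genuinely delicate point is this last monotonicity step. It would be a mistake to deduce it merely from $f(\textswab{p}^*)\ge f(\textswab{p}_*)$, since a concave function on an interval may exceed both endpoint values in the interior; the argument really uses that $\textswab{p}^*$ dominates $f$ at \emph{every} point of the segment, not just at the far endpoint. Everything else --- existence and uniqueness of the two maximizers, the affinity of $g$, the intermediate-value argument, and the bookkeeping with $\cl{P}_{=}\subseteq\cl{P}_{\le}$ --- is routine.
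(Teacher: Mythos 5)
Your argument is correct and is essentially the paper's own proof: both move along the segment joining the inequality-constrained optimum $\textswab{p}_*$ to the uniform distribution $\textswab{p}^*$, use linearity (affinity) of the signal energy to locate a point with energy exactly $N_S$, and use concavity of $\sum_{\nu}\sqrt{\textfrak{p}_{\nu}}$ together with the fact that $\textswab{p}^*$ is the unconstrained maximizer to conclude the objective does not decrease along the segment, you merely adding routine extras (uniqueness from strict concavity, the explicit intermediate-value step, and an elementary proof of the monotonicity fact the paper cites to Boyd--Vandenberghe). Your closing caveat, though, is mistaken: the weaker fact $\phi(1)\ge\phi(0)$ does suffice, since concavity alone gives $\phi(t_0)\ge(1-t_0)\phi(0)+t_0\phi(1)\ge\phi(0)$ --- a concave function on a segment is bounded below by the smaller of its endpoint values, so the possibility of an interior excess is irrelevant to the lower bound you need.
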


The optimum probe state for binary phase discrimination is particularly straightforward.
\begin{cor}[Binary case]
For $M=2$, the optimum probe state for $N_S < 1/2$ is
\begin{align} \label{binoptstate}
\ket{\psi} = \sqrt{1-N_S}\,\kets{0} + \sqrt{N_S}\, \kets{1}.
\end{align}
\begin{proof}
This follows immediately by solving for $A$ and $B$ in Theorem 3.
\end{proof}
\end{cor}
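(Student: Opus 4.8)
The plan is to specialize Theorem~3(a) to $M=2$ and simply read off $A$ and $B$. For $M=2$ the distribution $\textswab{p}=(\textfrak{p}_0,\textfrak{p}_1)$ has only two components, so the two Lagrange constraints $\textfrak{p}_0+\textfrak{p}_1=1$ and $0\cdot\textfrak{p}_0+1\cdot\textfrak{p}_1=N_S$ determine it completely, giving $\textfrak{p}_1=N_S$ and $\textfrak{p}_0=1-N_S$. Substituting this $\textswab{p}$ into the optimal-state form \eqref{opt} yields \eqref{binoptstate} immediately.

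The only thing that genuinely needs checking is that this $\textswab{p}$ is the one produced by the parametrization $\textfrak{p}_\nu=1/(A+\nu B)^2$ of Theorem~3(a) with \emph{positive} constants $A,B$, since that is the hypothesis under which the Lagrange stationary point is the true global optimum. Setting $A=1/\sqrt{1-N_S}$ and $A+B=1/\sqrt{N_S}$ reproduces $\textfrak{p}_0$ and $\textfrak{p}_1$ exactly, and $B=1/\sqrt{N_S}-1/\sqrt{1-N_S}>0$ precisely because $N_S<1/2$ forces $\sqrt{N_S}<\sqrt{1-N_S}$. Hence the energy constraint lies in the regime $N_S<(M-1)/2=1/2$ to which Theorem~3(a) applies, and \eqref{binoptstate} is the minimum-error-probability probe.

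There is essentially no obstacle in this corollary: the binary case collapses to a one-parameter family, so once Theorem~3 is available the statement reduces to solving two linear equations and performing one sign check. The only mild point to be careful about is not to omit the verification $B>0$, because the entire force of Theorem~3(a) — concavity of $\sum_{\nu}\sqrt{\textfrak{p}_\nu}$ yielding a global maximum at an interior point of the constraint set — relies on being strictly below the threshold energy, and the hypothesis $N_S<1/2$ is exactly what guarantees this.
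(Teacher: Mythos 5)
Your proposal is correct and follows essentially the same route as the paper, which simply notes that the corollary follows by solving for $A$ and $B$ in Theorem~3; your explicit determination of $\textfrak{p}_0=1-N_S$, $\textfrak{p}_1=N_S$ and the check that $B>0$ when $N_S<1/2$ just spell out that same step in detail.
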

However, note that our proposed implementation of the binary case for
achieving the minimum error probability
\begin{align} \label{binerrorprob}
   \overline{P}_e = 1/2 - \sqrt{N_S(1 - N_S)}
\end{align}
in Section IV uses a different probe state (with the same signal energy and performance) for practical reasons.
 
In the general ($M > 2$) case, it appears that closed-form solutions for $A$ and $B$ appearing in the expression \eqref{eqn:optimal} for the
optimal state cannot be obtained, so recourse to numerical evaluation becomes necessary. Simulations of the resulting performance are presented in Section III.E.

\subsection{Optimal measurement}

Let us briefly discuss the quantum measurement, as determined by the corresponding POVM, that optimally distinguishes the output states $\{\ketri{\psi_m}\}_{m=0}^{M-1}$. Following refs.~\cite{Kato99,Ban97,Belavkin75}, the optimum POVM, the SRM, is a rank-one measurement with elements $\hat{\Pi}_{m} = \ket{\chi_m}\brari{\chi_m}, \, m \in \mathbb{Z}_M$ ($\ketri{\chi_m}$ may have norm less than one) with
\begin{align} \label{optimumPOVM}
\ketri{\chi_m} = \left(\sum_{n=0}^{M-1} \ket{\psi_n} \brari{\psi_n}\right)^{-1/2} \ketri{\psi_m},
\end{align}
where the operator in parentheses (and its inverse) is defined on just the span of $\{\ketri{\psi_m}\}_{m=0}^{M-1}$. For a single-mode probe of the form $\kets{\psi}= \sum_{n=0}^{M-1} \sqrt{\textfrak{p}_n} \, \kets{n}$ of Theorem 2, we have
\begin{align}
\ketr{\chi_m} = \frac{1}{\sqrt{M}}\sum_{n \,:\, \textfrak{p}_n \neq 0} e^{imn\theta_M}\, \ketr{n}.
\end{align}
Note that the optimum measurement is the same for any two single-mode 
probe states $|\psi\rangle_S,|\psi'\rangle_S$ with $\textswab{p}, \textswab{p}'$ having the same support.
The optimum state of \eqref{opt} has no zero coefficients so that the optimum measurement elements are
\begin{align} 
\label{optmeas}
  \ketr{\chi_m} 
  = \frac{1}{\sqrt{M}}\sum_{n =0}^{M-1} e^{imn\theta_M}\, \ketr{n}.
\end{align}
Since these vectors form an orthonormal set, the measurement is a projective (von Neumann) measurement. Indeed, these measurement vectors coincide with the eigenstates of the unitary Pegg-Barnett phase operator \cite{PeggBarnett} on the truncated Hilbert space 
$\cl{H}_{M-1} = \mathrm{span}\,\{\ketr{0}, \ldots, \ketr{M-1}\}$.

\subsection{Combined energy constraint}
We have so far assumed that a constraint is placed on the energy of just the signal modes. In some situations, it may make sense to constrain -- as a measure of all the resources involved in state preparation --  the average total energy in the signal and idler modes \emph{combined} without restricting either individually. In other words, we impose the constraint

\begin{align}
  \left\langle\hat{N}\right\rangle &:= \left\langle\hat{N}_S + \hat{N}_I\right\rangle \nonumber\\
  &:= \left\langle{\sum_{j=1}^J \hat{N}_S^{(j)}}\right\rangle + \left\langle{\sum_{j'=1}^{J'} \hat{N}_I^{(j')}}\right\rangle
  \leq N,
\end{align}
for a given number $N$,  where $\{\hat{N}_S^{(j)}\}$ and $\{\hat{N}_I^{(j')}\}$ are the modal signal and idler photon number operators respectively, and ask for a state satisfying this constraint that minimizes the error probability.  This new problem has the same solution that we found before. 
\begin{thm}
Among pure-state probes, the minimum error probability achievable under a combined energy constraint $N$ is identical to the minimum error probability achievable under a signal energy constraint $N$. 
\begin{proof}
The combined energy constraint of $N$ is clearly more restrictive than a 
signal energy constraint of $N$.  We showed in Section III.A that the optimal
pure-state probe for a signal energy constraint of $N$ is a signal-only
state of energy $N$.  Since this state also has combined energy $N$, 
it remains the optimal state under the combined energy constraint.
\end{proof}
\end{thm}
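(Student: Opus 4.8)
The plan is to pin the combined-constraint optimum from both sides using the signal-constraint optimum. First I would note that, for pure probe states, the combined constraint is strictly the more demanding of the two: any probe with $\langle\hat{N}_S+\hat{N}_I\rangle\le N$ also has $\langle\hat{N}_S\rangle\le N$, since $\langle\hat{N}_I\rangle\ge 0$. Hence the set of pure probes feasible for the combined problem is contained in the set feasible for the signal-energy problem, and minimizing $\overline{P}_e$ over the smaller set can only increase it; so the combined-constraint minimum is at least the signal-constraint minimum (with $N_S=N$).

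For the reverse inequality I would exhibit one pure probe that is feasible for the combined problem and attains the signal-constraint minimum. Here I invoke Sections III.A--III.C: for $N<(M-1)/2$, Theorem~2(a) together with Theorem~3(a) shows the signal-constraint minimum is attained by a single-mode, \emph{signal-only} probe of signal energy exactly $N$; for $N\ge(M-1)/2$, Theorem~2(b) exhibits a single-mode signal-only state of energy $(M-1)/2$ achieving zero error, hence optimal. In either regime this optimizer has no idler modes, so $\langle\hat{N}_I\rangle=0$ and its combined energy equals its signal energy, which is $\le N$. It is therefore feasible for the combined problem, whence the combined-constraint minimum is at most the signal-constraint minimum. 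The two inequalities together give equality.

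I do not expect a genuine obstacle: the substantive work --- the equivalence-class reduction of Theorem~1 and the collapse of an arbitrary probe onto a single-mode signal-only state without raising its energy --- is already done. The one point to watch is that the Section III.A analysis really delivers an \emph{attained} minimum at a signal-only state, not merely an infimum approached by such states; that is why I would split into the two energy regimes and cite the explicit optimizers of Theorems~2(b) and~3(a). Intuitively, the claim is simply that once idler photons are charged against the energy budget, entanglement-assisting idler modes still cannot help, because the signal-constrained optimum already used none.
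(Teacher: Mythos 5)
Your proposal is correct and follows essentially the same argument as the paper: the combined constraint is the more restrictive one, and the signal-constrained optimum found in Section III.A is a signal-only (idler-free) state whose combined energy equals its signal energy, so it remains feasible and optimal under the combined constraint. Your explicit split into the two energy regimes and the remark about attainment versus infimum simply spell out what the paper compresses into "We showed in Section III.A\dots"
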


\subsection{Mixed probe states}

In our work so far, we have assumed that the probe state $\ketsi{\psi}$ was pure. Of course, we may also use a mixed probe state $\hat{\rho}_{IS}$ resulting in the mixed output states
\begin{align} \label{mixedoutputstate}
\hat{\rho}_m = \left(\hat{I}_I \otimes \hat{U}_m\right) \hat{\rho}_{IS} \left(\hat{I}_I \otimes \hat{U}^{\dag}_m \right)
\end{align}
for $\hat{U}_m$ given by Eq.~\eqref{unitary.m}. We now show that allowing for mixed probes does not lead to improved performance. We actually prove a stronger result.

\begin{thm}
Let $\hat{\rho}_{IS}$ be a mixed state with ensemble decomposition 
$\hat{\rho}_{IS} = \sum_j \pi_j |\psi_j\rangle_{IS}\langle\psi_j|$ and 
with signal energy $\mathrm{tr}(\hat{\rho}_{IS}\hat{N}_S)\leq N$ or 
with combined energy $\mathrm{tr}(\hat{\rho}_{IS}\hat{N})\leq N$.
A transmitter preparing the ensemble $\{|\psi_j\rangle_{IS}\}$ with 
probabilities $\{\pi_j\}$ and a receiver making optimal measurements
conditioned on knowledge of $j$ cannot beat the performance of the 
optimal pure-state probe under either energy constraint.
\begin{proof} 
Let $\overline{P}_e[\cdot]$ denote the minimum error probability
attainable on using the argument as probe state.  
We have the chain 
of inequalities
\begin{align}
  \overline{P}_e[\hat{\rho}_{IS}]
   &\geq \sum_j \pi_j \overline{P}_e[|\psi_j\rangle_{IS}]  
        \label{eqn:ln1} \\
   &= \sum_j \pi_j \overline{P}_e[|\psi_j^*\rangle_S] 
        \label{eqn:ln2} \\
   &\geq \overline{P}_e[|\overline{\psi}\rangle_S] 
        \label{eqn:ln3} \\
   &\geq \overline{P}_e[|\psi^{\mathrm{opt}}\rangle_S].
\end{align}
In \eqref{eqn:ln1}, the right-hand side represents the optimum performance
under the conditions of the theorem statement and the inequality holds
because the performance given by the left-hand side is obtained when knowledge of 
$j$ is ignored by the receiver \cite{notemixed}.  
$|\psi_j^*\rangle_S$ is the state of the form \eqref{singlemodepsi} with the same \textswab{p} as $|\psi_j\rangle_{IS}$, denoted $\textswab{p}_j$,
and \eqref{eqn:ln2} holds by Theorem 1.  In \eqref{eqn:ln3}, 
$|\overline{\psi}\rangle_S$ is the state of the form \eqref{singlemodepsi}
with $\textswab{p} = \sum_j \pi_j \textswab{p}_j$, and the inequality
is true because $\overline{P}_e$ of \eqref{Peformula} is a sum of 
convex functions of $\textswab{p}$ \cite{notemixed2}, and hence convex itself.  None of the state transformations above has increased the signal
(or combined) energy from that of $\hat{\rho}_{IS}$, so that 
$|\overline{\psi}\rangle_S$ is a pure state with energy bounded by $N$.
It cannot beat the optimum pure state
$|\psi^{\mathrm{opt}}\rangle_S$ with energy $N$.
\end{proof}
\end{thm}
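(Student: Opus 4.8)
The plan is to reduce the mixed-probe problem to the single-mode pure-probe problem already solved in Theorem~3, by a chain of inequalities that at no stage increases the energy budget. Write $\overline{P}_e[\,\cdot\,]$ for the Helstrom error probability attainable with the indicated probe. The transmitter/receiver pair in the statement, in which $j$ is disclosed to the receiver, achieves error probability precisely $\sum_j \pi_j\,\overline{P}_e[|\psi_j\rangle_{IS}]$, and this is no larger than $\overline{P}_e[\hat{\rho}_{IS}]$: applying the POVM that is optimal for $\hat{\rho}_{IS}$ to the output states $\hat{\rho}_m$ of \eqref{mixedoutputstate}, together with the decomposition $\hat{\rho}_m = \sum_j \pi_j\,(\hat{I}_I\otimes\hat{U}_m)|\psi_j\rangle_{IS}\langle\psi_j|(\hat{I}_I\otimes\hat{U}_m^{\dagger})$, exhibits the attained success probability as the $\{\pi_j\}$-average of the per-$j$ success probabilities, each of which is at most the corresponding optimum. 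It therefore suffices to show $\sum_j \pi_j\,\overline{P}_e[|\psi_j\rangle_{IS}] \ge \overline{P}_e[|\psi^{\mathrm{opt}}\rangle_S]$.

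Next I would invoke Theorem~1 to replace each $|\psi_j\rangle_{IS}$ by the single-mode state $|\psi_j^*\rangle_S = \sum_{\nu=0}^{M-1}\sqrt{\textfrak{p}_{j,\nu}}\,\kets{\nu}$ carrying the same modulo-$M$ photon distribution $\textswab{p}_j$, with no change in error probability. Feeding the identity $\boldsymbol{\lambda}=M\,\textswab{p}$ from the proof of Theorem~3 into \eqref{Peformula} shows that for \emph{any} pure probe $\overline{P}_e = 1 - \tfrac{1}{M}\bigl(\sum_{\nu}\sqrt{\textfrak{p}_\nu}\bigr)^2$, a function of $\textswab{p}$ alone. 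The one step that is not pure bookkeeping is to recognize this function as convex on the simplex of admissible $\textswab{p}$'s: using $\sum_\nu \textfrak{p}_\nu = 1$ one has $\bigl(\sum_\nu\sqrt{\textfrak{p}_\nu}\bigr)^2 = 1 + \sum_{\nu\neq\mu}\sqrt{\textfrak{p}_\nu\textfrak{p}_\mu}$, so $\overline{P}_e = \tfrac{M-1}{M} - \tfrac{1}{M}\sum_{\nu\neq\mu}\sqrt{\textfrak{p}_\nu\textfrak{p}_\mu}$, and each term $-\sqrt{\textfrak{p}_\nu\textfrak{p}_\mu}$ is the negative of the geometric mean of two coordinates, hence convex, so a finite sum of such terms is convex. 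Jensen's inequality then yields $\sum_j \pi_j\,\overline{P}_e[|\psi_j^*\rangle_S] \ge \overline{P}_e[|\overline{\psi}\rangle_S]$, where $|\overline{\psi}\rangle_S$ is the state of the form \eqref{singlemodepsi} with $\textswab{p} = \sum_j \pi_j\,\textswab{p}_j$, which is again an admissible $\textswab{p}$ since that set is convex.

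Finally I would check the energy bound for $|\overline{\psi}\rangle_S$. Its signal energy is $\sum_\nu \nu\sum_j \pi_j\,\textfrak{p}_{j,\nu} = \sum_j \pi_j\bigl(\sum_\nu \nu\,\textfrak{p}_{j,\nu}\bigr)$, and by the photon-number truncation argument in the proof of Theorem~2(a) each bracketed term is at most the signal energy of the original $|\psi_j\rangle_{IS}$; hence the signal energy of $|\overline{\psi}\rangle_S$ is at most $\sum_j \pi_j\,\mathrm{tr}\bigl(|\psi_j\rangle_{IS}\langle\psi_j|\,\hat{N}_S\bigr) = \mathrm{tr}(\hat{\rho}_{IS}\hat{N}_S) \le N$ in the signal-energy case, and a fortiori $\le N$ in the combined-energy case because combined energy dominates signal energy. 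Thus $|\overline{\psi}\rangle_S$ is an admissible single-mode pure probe, and Theorem~3 with Lemma~1 gives $\overline{P}_e[|\overline{\psi}\rangle_S] \ge \overline{P}_e[|\psi^{\mathrm{opt}}\rangle_S]$, which closes the chain.

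The only genuinely nontrivial ingredient I anticipate is the convexity of $\overline{P}_e$ as a function of $\textswab{p}$; everything else assembles Theorems~1--3 and Lemma~1. The subtle point to watch is that the linear functional $\sum_\nu \nu\,\textfrak{p}_\nu$ is the signal energy of the truncated representative $|\psi_j^*\rangle_S$, not of $|\psi_j\rangle_{IS}$ itself, so the energy estimate must be routed through the Theorem~2(a) inequality rather than claimed outright.
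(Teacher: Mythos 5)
Your proposal is correct and follows essentially the same route as the paper's proof: the perfect-measurement lower bound giving $\overline{P}_e[\hat{\rho}_{IS}]\geq\sum_j\pi_j\overline{P}_e[|\psi_j\rangle_{IS}]$, reduction to single-mode representatives via Theorem 1, convexity of $\overline{P}_e$ in $\textswab{p}$ through the concavity of the pairwise geometric means followed by Jensen, and energy bookkeeping to compare with the optimal pure probe. Your explicit formula $\overline{P}_e = 1-\tfrac{1}{M}\bigl(\sum_\nu\sqrt{\textfrak{p}_\nu}\bigr)^2$ and the routing of the energy estimate through the Theorem 2(a) truncation argument simply make explicit what the paper's proof states more tersely.
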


With Theorem 5 in hand, we may conclude that the state of Theorem 3 is in fact the pure or mixed state of energy at most $N_S$ that has the lowest error probability. This result has the following implication. Recall that a \emph{classical state} of the signal and idler modes is a density operator $\rho_{IS}$ expressible as a mixture of coherent states in the form
\begin{align} \label{classical}
\rho_{IS} =  \int   P(\boldsymbol{\alpha},\boldsymbol{\beta}) \keti{\boldsymbol{\alpha}} \kets{\boldsymbol{\beta}}\brai{\boldsymbol{\alpha}} \bras{\boldsymbol{\beta}} \,d^2\boldsymbol{\alpha} \, d^2\boldsymbol{\beta}
\end{align}
where $\keti{\boldsymbol{\alpha}}$ ($\kets{\boldsymbol{\beta}}$) is a  multimode idler (signal) coherent state and $P(\boldsymbol{\alpha},\boldsymbol{\beta}) \geq 0$ is a probability density \cite{GerryKnight05}. Such states are readily prepared from laser outputs using beam splitters and classical random numbers, and standard measurements made on them are quantitatively describable using semiclassical photodetection theory \cite{Shapiro09}. 
The optimal state \eqref{opt} is a nonclassical state (coherent states are the only pure classical states), and being optimal, it performs better than the coherent state of energy $N_S$ (see Fig.~ 3). However, from Theorem $3$ alone, we cannot conclude that it performs better than an arbitrary classical state of the form \eqref{classical}. With  the addition of Theorem $4$, we can draw the conclusion that the nonclassical state 
\eqref{opt} outperforms \emph{all} classical states. Note that our argument does not imply that the coherent state of energy $N_S$ is the optimal classical state with energy $N_S$. 

\subsection{Numerical results}

In this section, we numerically compare the performance of the optimal state of Theorem $3$ with that of some standard states in quantum optics. We first consider some signal-only states and then a couple of two-mode entangled states. With the exception of the optimum state \eqref{opt}, the performance curves are obtained by computing the first row $\mathbf{G}_0$ of the Gram matrix, for which analytical formulas are given below for each family of states. We then compute the eigenvalue vector $\boldsymbol{\lambda}$ (or equivalently $\textswab{p}$) using an FFT routine (cf. Eq.~\eqref{eqn:lambda}), and finally the error probability $\overline{P}_e$ via Eq.~\eqref{Peformula}. 

\subsubsection{Optimal state}

For each value of the probe signal energy $N_S$, the $\textswab{p}$ of the optimal state is first obtained by numerically solving for $A$ and $B$ appearing in \eqref{eqn:optimal}. We then compute the error probability via \eqref{Peformula} and plot it against $N_S$ in Fig.~3 for a few values of $M$. 
We find that as $N_S \to (M-1)/2$, $A \to \sqrt{M}$ and $ B \to 0$, so that the numerically computed optimal state of \eqref{eqn:optimal} approaches the uniform superposition state of Theorem $2 (b)$. The approach to zero error in the same limit is clearly visible in Fig.~3. For comparison, the coherent state performance (See section III.E.2) is also plotted alongside. 

\begin{figure}
\centering
\includegraphics[trim= 30mm 80mm 30mm 84mm, clip=true, scale=0.55]{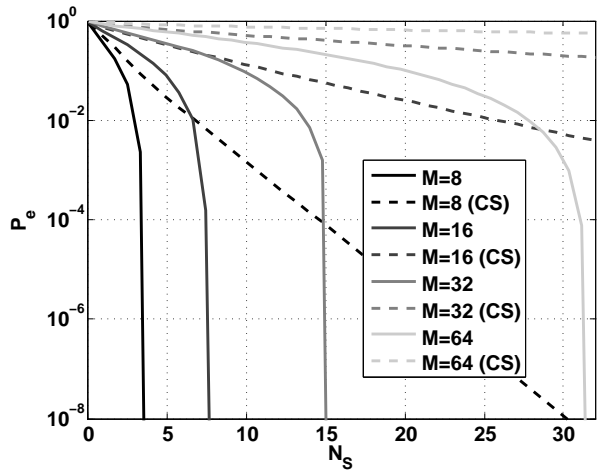}
\caption{\small The error probability $\overline{P}_e$ of the optimum probe given by Eq.~\eqref{opt} (solid) and the coherent-state probe (dashed) as a function of $N_S$ for  $M=8, 16, 32,$ and 64. Curves for larger $M$ are shown lighter.}
\end{figure}

\subsubsection{Squeezed state and coherent state}

The single-mode squeezed states \cite{Yuen76} are a well-known class of states. 
Let $\hat{a}_S = \mu^*\hat{b} - \nu\hat{b}^\dagger$, where $|\mu|^2 - |\nu|^2 = 1$
and $\hat{b}$ is in a 
coherent state $|\alpha\rangle$ satisfying $\hat{b}|\alpha\rangle = \alpha|\alpha\rangle$, 
$\alpha > 0$.
Then, the signal mode $\hat{a}_S$ is in the squeezed state (called two-photon coherent state (TCS) in \cite{Yuen76}) $|\alpha;\mu,\nu\rangle$.
In the following, we assume $\mu,\nu$ are real and $\mu > 0$.
If $\nu<0$,  this transformation corresponds to squeezing in the imaginary (phase) quadrature while stretching the real quadrature.  The average energy of this state may be calculated to be
\begin{align}
    \left\langle \hat{a}_S^\dag \hat{a}_S \right\rangle 
  = \left(\mu - \nu\right)^2 \alpha^2 + \nu^2.
\end{align}
The action of the $m$-th phase $\theta_m$ takes $\hat{a}_S$ to 
$\hat{a}_R^{(m)} = \hat{a}_S e^{i\theta_m}$ in the Heisenberg picture so that
\begin{align}
     \hat{a}_R^{(m)} 
   = \mu e^{i\theta_m} \hat{b} - \nu e^{i\theta_m} \hat{b}^\dag.
\end{align}
Thus, the Schr\"{o}dinger-picture state at the end of these transformations is 
$\ketr{\alpha; \mu e^{-i\theta_m}, \nu e^{i\theta_m}}$.

We may now use Eq.~(3.25) of ref.~\cite{Yuen76} to write down the Gram matrix elements $G_{0m}$:-
\begin{align} \label{SSG0}
  G_{0m} &=
    (\mu^2 - \nu^2 e^{2i\theta_m})^{-1/2} \notag\\
    &\quad
     \times \exp\left[
         \alpha^2
         \Bigl(
            \frac{e^{i\theta_m} - \mu\nu(e^{2i\theta_m} - 1)}{\mu^2 - \nu^2 e^{2i\theta_m}}  - 1
         \Bigr)
          \right].
\end{align}

The \emph{coherent state} $\kets{\alpha}$ is identical to the TCS $\ketr{\alpha; 1, 0}$ and its Gram matrix elements can be obtained from eq.~\eqref{SSG0}.

\begin{figure}
\includegraphics[scale=0.43]{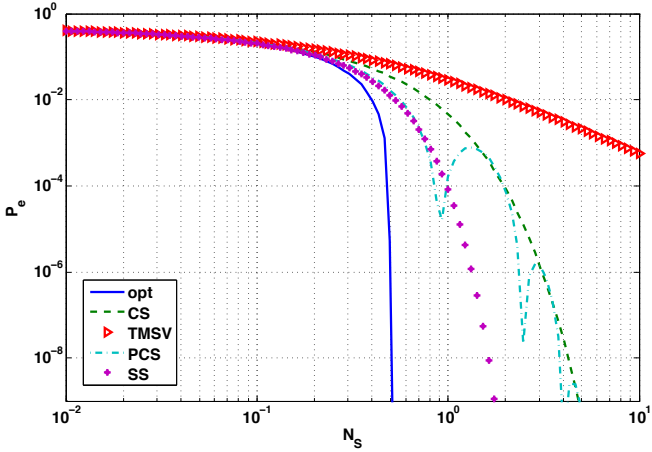}
\caption{(Color online) Minimum error probability as a function of signal energy  $N_S$ for the states of Section III.E for 
the binary case $M = 2$. The squeezed state has squeeze
parameters $\mu,\nu$ optimized for each $N_S$, and the two-mode squeezed vacuum has $J = 1$.
}
\end{figure}
\subsubsection{Two-mode squeezed vacuum state}

The two-mode squeezed vacuum (TMSV) state with signal energy $N_S$ is the two-mode state \cite{GerryKnight05}:-
\begin{align} \label{TMSV}
\ketsi{\psi^{\mathrm{TMSV}}(N_S)} = \sqrt{\frac{1}{N_S+1}}\;\sum_{n=0}^{\infty} \sqrt{\frac {N_S^n} {(N_S+1)^n}} \keti{n}\kets{n}.
\end{align}
We may also consider using $J$ copies of a TMSV state as probe. In order to keep the same signal energy $N_S$, we must use $J$ copies of $\ketsi{\psi^{\mathrm{TMSV}}(N_S/J)}$. We may directly compute the first row of the Gram matrix of the output states as
\begin{align} 
\label{TMSVG0}
  G_{0m} =
    \left[ 
       \frac{1}{1 + \frac{N_S}{J}(1 - e^{i m \theta_M})}
    \right]^J, \; \;\; m \in \mathbb{Z}_M.
\end{align}

\subsubsection{Pair-coherent state}

The pair-coherent states (PCS) are a family of two-mode states parametrized by $\zeta \in \mathbb{C}$ and a non-negative integer $q$  \cite{PCSrefs}. We will consider the case $q=0$ corresponding to equal energy in the signal and idler modes.  Such a PCS has the form
\begin{align} \label{PCS}
\ketsi{\psi^{\mathrm{PCS}}(\zeta)} = \frac{1}{\sqrt{I_0(2|\zeta|)}}\;\sum_{n=0}^{\infty} \frac {\zeta^n} {n!} \keti{n}\kets{n},
\end{align}
where $I_0(\cdot)$ is the modified Bessel function of first kind and order zero. The signal energy $N_S$ is related to $\zeta$ via 
\begin{align}
N_S = \frac{|\zeta| I_1(2|\zeta|)}{I_0(2|\zeta|)},
\end{align}
where $I_1(\cdot)$ is the modified Bessel function of first kind and order one. Since the phase of $\zeta$ does not affect the performance, we will assume $\zeta$ to be real and positive. The Gram matrix elements may be computed to be
\begin{align} \label{PCSG0}
G_{0m} =\frac{I_0(2\zeta e^{im\theta_M/2})}{I_0(2\zeta)}, \;\;\; m \in \mathbb{Z}_M.
\end{align}


\subsubsection{Performance curves}
Numerical results for $M = 2$ and $M = 8$ are shown in Figs. 4 and 5 
respectively.  For the special case of binary discrimination, the minimum error 
probability is
\begin{equation}
  P_e = \frac{1}{2}\left(1 - \sqrt{1 - |\sigma|^2}\right),
\end{equation}
where $\sigma = {}_{IR}\langle\psi_0|\psi_1\rangle_{IR}$.
For each of the probe states considered above, we have
\begin{align}
  \sigma_{\mathrm{CS}} &= e^{-2N_S} \\
  \sigma_{\mathrm{SS}} &= e^{-2N_S(N_S + 1)} \\
  \sigma_{\mathrm{TMSV}} &= \frac{1}{(1 + 2N_S/J)^J} \\
  \sigma_{\mathrm{PCS}} &= \frac{J_0(2\zeta)}{I_0(2\zeta)}.
\end{align}
For the squeezed-state (SS) probe, an optimal squeezing has been 
assumed (see below).  For
multiple copies of two-mode squeezed vacuum, note that $\sigma_{\mathrm{TMSV}}$ is 
decreasing in $J$ and $\sigma_{\mathrm{TMSV}}\to \sigma_{\mathrm{CS}}$, as $J\to\infty$.  For certain values of $N_S$, the PCS performs best among
all but the optimal state.

\begin{figure}
\includegraphics[scale=0.4]{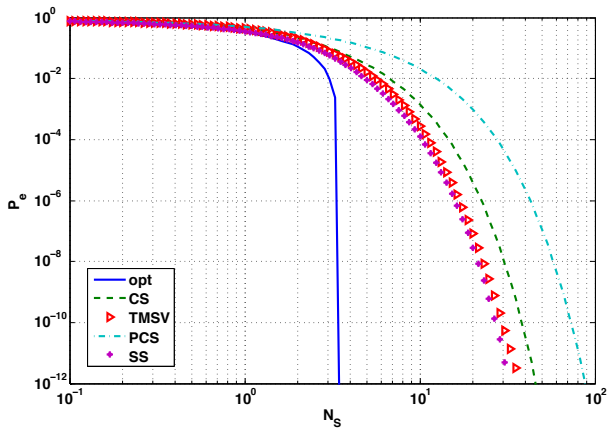}
\caption{(Color online) Minimum error probability as a function of signal energy  $N_S$ for the states of Section III.E for 
$M = 8$.  The squeezed state has $f = \nu^2/N_S$ optimized for each $N_S$ and the two-mode squeezed vacuum uses
 $J = 15$ signal modes.}
\end{figure}

For $M=8$, on the other hand, the PCS performs the worst and is consistently beaten by the coherent state. The squeezed state with an optimized amount of squeezing performs the best among the suboptimal states and is closely matched by the two-mode squeezed vacuum, both of which consistently beat the coherent state.

For the squeezed-state probe, we can further discuss the optimal squeeze parameters $\mu,\nu$ for a given energy $N_S$.
For symmetric phase discrimination, we might assume that phase-squeezing,
$\mu > 0, \nu < 0$,
is optimal.  The following is what we find.
\begin{itemize}
\item $M = 2$.  The optimal $\nu = N_S/\sqrt{1 + 2N_S}$ is positive.  The phase space representation is shown in Fig.~$6(a)$ for this case.
Note that for a given $N_S$, the mean $\langle\hat{a}_R\rangle$ is independent of the sign of $\nu$.  

\item $M = 3$.  Numerically, we find the optimal $\nu$ is positive. 
\item $M = 4$.  Numerically, no squeezing $\nu = 0$ (coherent state) appears to be optimal. 

\item $M \geq 5$.  Numerically, we find the optimal $\nu$ is negative. 
For large $M$, phase-squeezing can be seen to reduce the overlap between
neighboring states (Fig.~$9 (b)$). 
\end{itemize}

\begin{figure}
\includegraphics[scale=0.65]{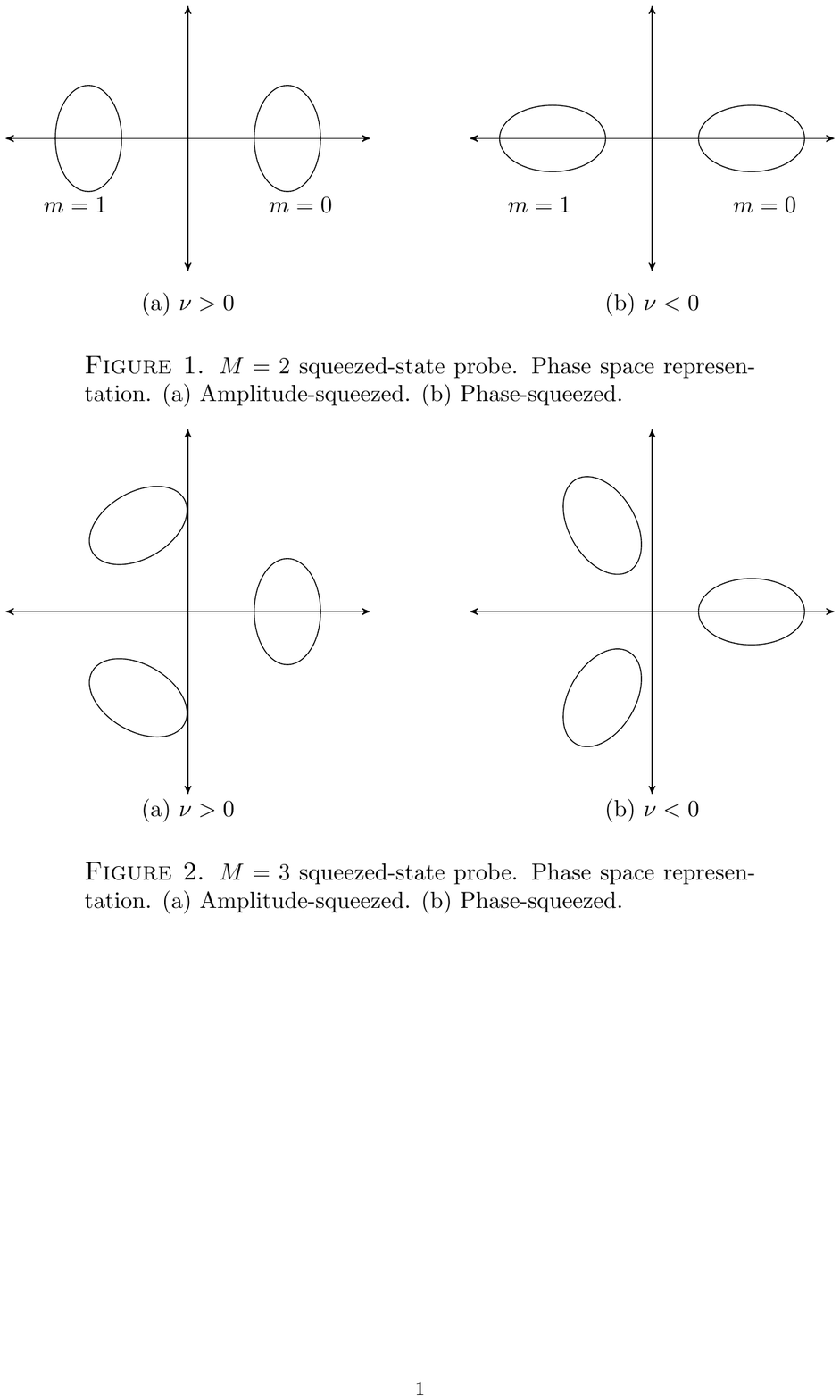}
\caption{\small Phase space representation of
$M = 2$ squeezed-state probe: (a) Amplitude-squeezed.  (b) Phase-squeezed.}
\end{figure}

\begin{figure}
\includegraphics[scale=0.65]{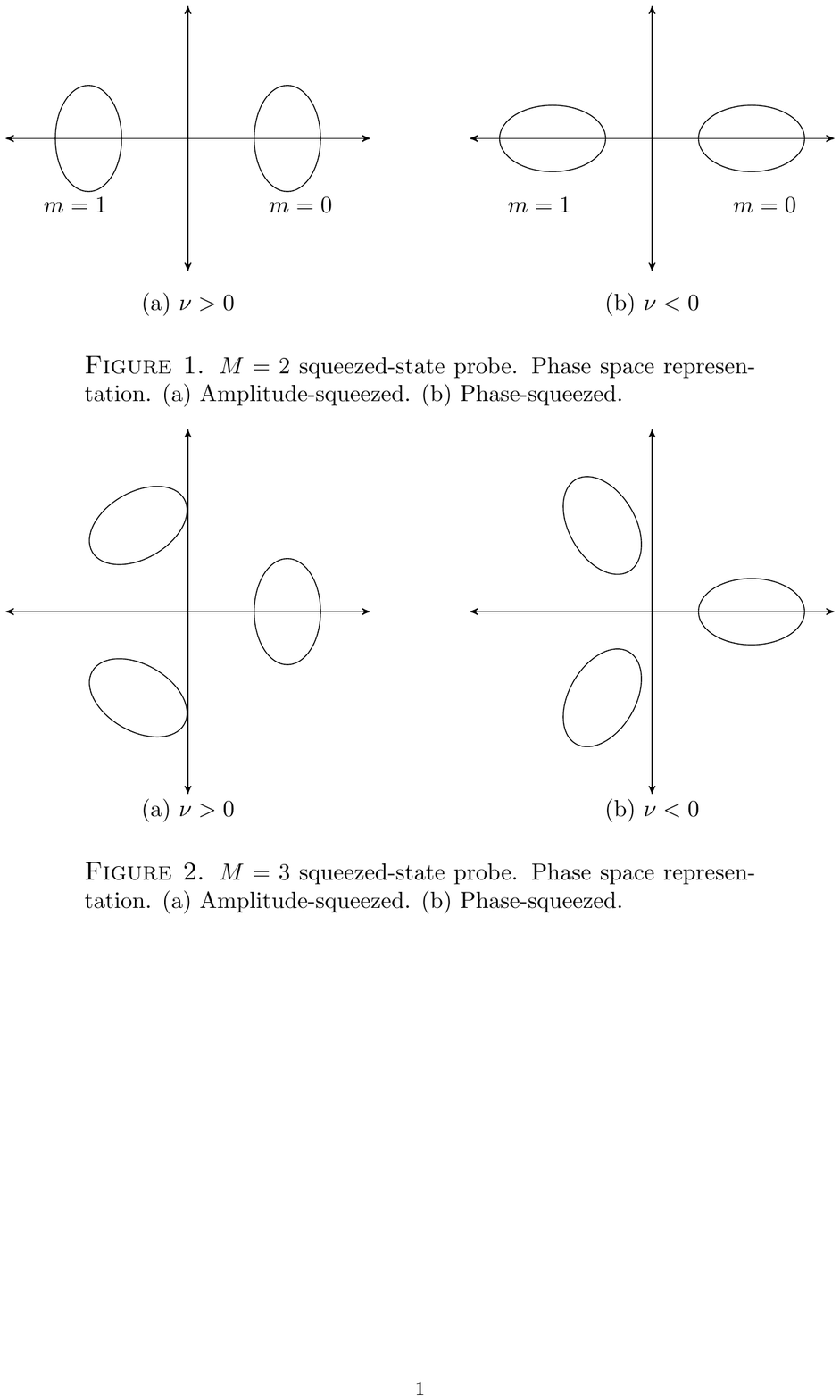}
\caption{\small Phase space representation of
$M = 3$ squeezed-state probe: 
(a) Amplitude-squeezed.  (b) Phase-squeezed.}
\end{figure}

\begin{figure}
\includegraphics[scale=0.65]{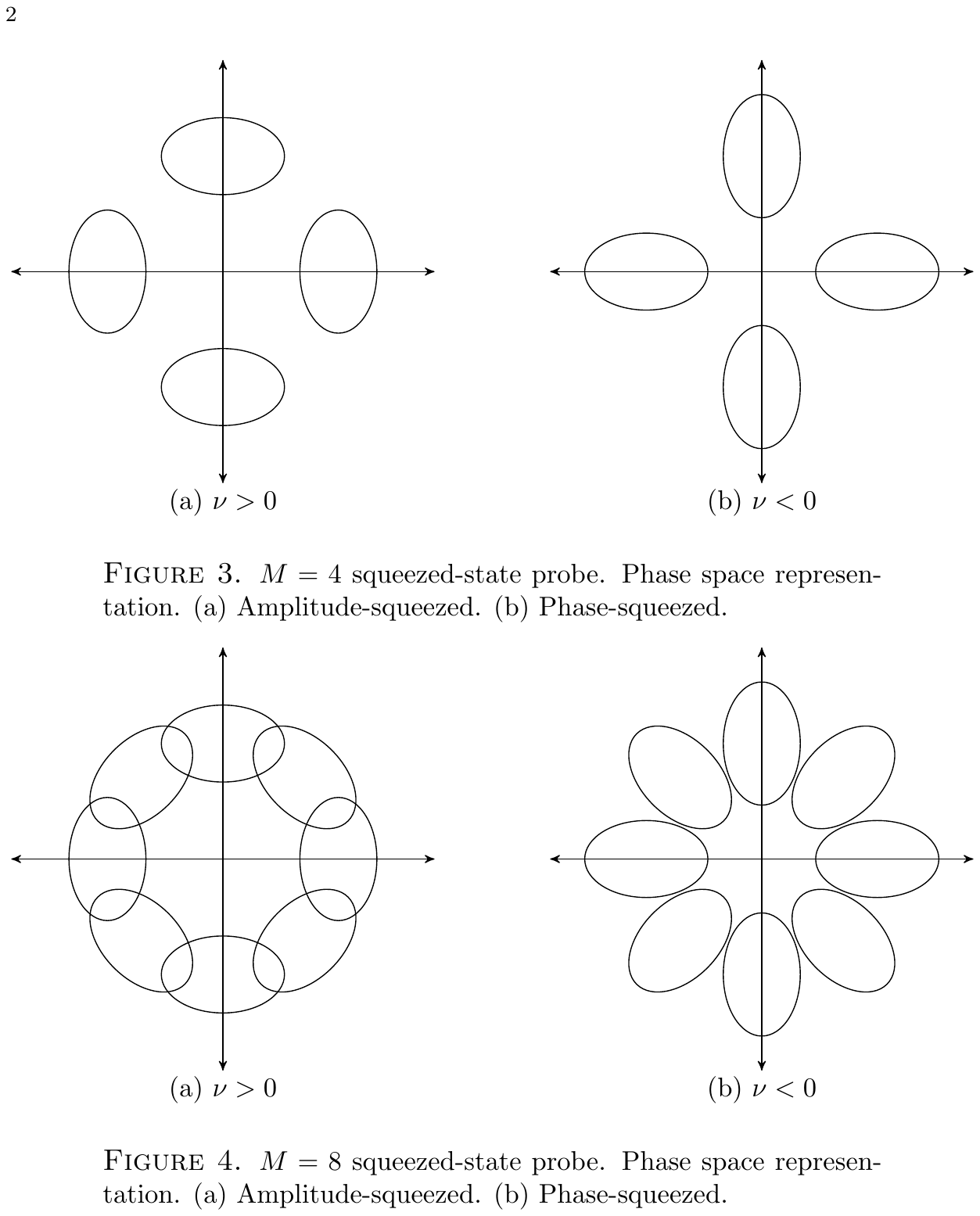}
\caption{\small Phase space representation of
$M = 4$ squeezed-state probe: 
(a) Amplitude-squeezed.  (b) Phase-squeezed.}
\end{figure}

\begin{figure}
\includegraphics[scale=0.65]{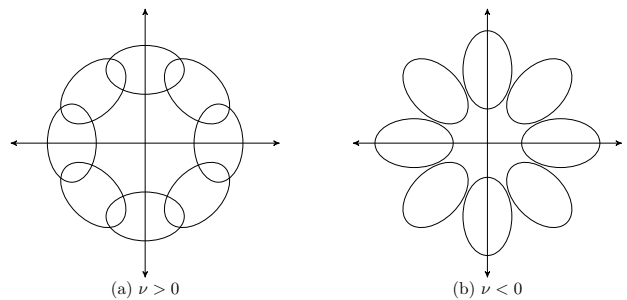}
\caption{\small Phase space representation of
$M = 8$ squeezed-state probe: 
(a) Amplitude-squeezed.  (b) Phase-squeezed.}
\end{figure}

\section{Implementation of Conditionally Optimal Binary Phase Shift Keying}

In this section, we show that the performance of the optimal state for the $M=2$ case, the state of Corollary 1, can be readily obtained in the laboratory with current technology. Furthermore, the inclusion of transmission losses and sub-unity detection efficiencies leads to occasional inconclusive outcomes (or \emph{erasures}) but leaves unchanged the error performance conditioned on no erasure. These results are applicable both to long-distance communication based on binary phase shift keying (BPSK) and to (short-distance) phase sensing or reading of a phase-encoded memory of the type described in ref.~\cite{Hirota11}.

\begin{figure*}
\includegraphics[scale=0.6]{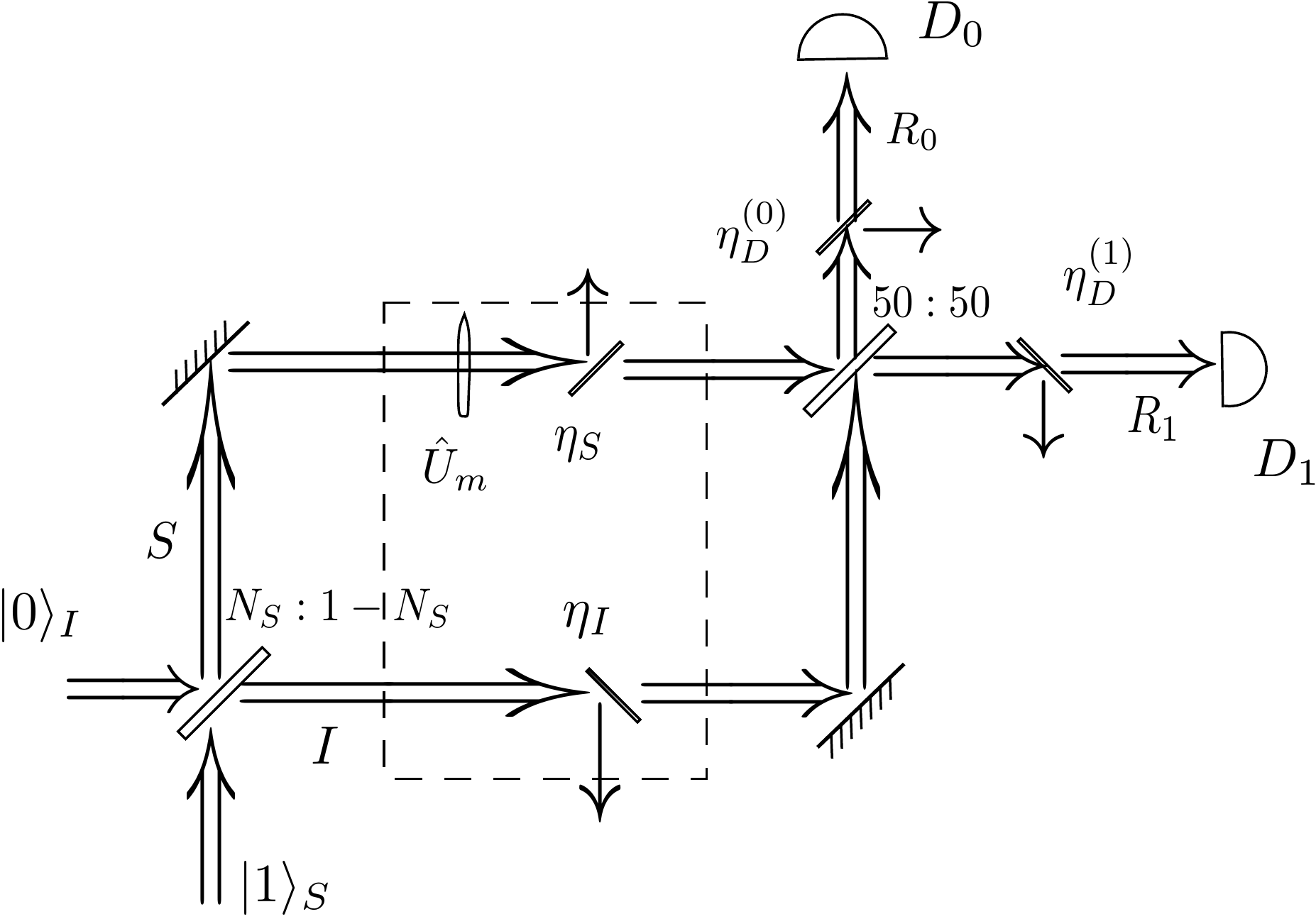}
\caption{\small Setup for realizing the optimal BPSK performance with signal energy $N_S$. The signal ($S$) and idler ($I$) modes are taken to be spatially separated modes with the same polarization and transverse spatial pattern. The $N_S:1-N_S$ beam splitter synthesizes the optimal probe state from a single-photon state. The shaded box containing the phase unitary transformation $\hat{U}_m$ on the signal mode represents the sender's modulator and the transmission medium in a communication scenario or the phase-encoded memory element in a quantum reading scenario. The 50:50 beam splitter performs a rotation of the ideal measurement basis into the single-photon states of the output modes $R_0$ and $R_1$. The small beam splitters with the indicated transmittances model the transmission losses suffered by the signal mode ($\eta_S$), the idler mode ($\eta_I$), and the quantum efficiencies of the detectors $D_0$ and $D_1$. Hypothesis $m$ is declared when detector $D_m$ clicks. If neither detector clicks, an erasure has occurred.}
\end{figure*}

The optimal probe state of Eq.~\eqref{eqn:optimal}, and thus also the $M=2$ optimal state of Eq.~\eqref{binoptstate} (we are assuming $N_S \leq 1/2$ for the latter state), is hard to prepare in a deterministic fashion. However, several techniques exist \cite{QSD,Fiurasek05,Bimbard10} to prepare such Fock-state superpositions in a conditional manner, of which some have been demonstrated experimentally \cite{Bimbard10}. The state \eqref{binoptstate} can be thought of as a qubit state in a ``single-rail'' encoding \cite{Koketal07}. The optimal measurement on the output states following the phase shift is, via Eq.~\eqref{optmeas}, a projective measurement onto the basis $\{\left(\kets{0} \pm \kets{1}\right)/\sqrt{2}\}$. Such a measurement would require implementing a unitary taking the $\{\left(\kets{0} \pm \kets{1}\right)/\sqrt{2}\}$ basis to the $\{\kets{0},\kets{1}\}$ basis, followed by photodetection. However, it is well-known (see ref.~\cite{Koketal07} and references therein) that it is impossible to effect the required unitary transformation deterministically with linear optics, leading to further inefficiencies in an implementation involving the state \eqref{binoptstate}.

By an appeal to Theorem 1, both the state preparation and measurement issues can be circumvented. To do so, we use instead of \eqref{binoptstate} the probe state
\begin{align} \label{altoptstate}
\ketsi{\psi} = \sqrt{1-N_S}\,\ketsi{10} + \sqrt{N_S}\, \ketsi{01},
\end{align} 
where the $S$ and $I$ modes may be spatially separated (as in Fig.~10) or may correspond to orthogonal polarization degrees of freedom of the same spatiotemporal mode. In effect, we have switched to a dual-rail qubit encoding \cite{Koketal07} in going from \eqref{binoptstate} to \eqref{altoptstate}. Note that the states \eqref{binoptstate} and \eqref{altoptstate} have the same $\mathbf{p}$ and $\textswab{p}$ and so have the same signal energy and error performance.

As shown in Fig.~4, the state \eqref{altoptstate} can be prepared by directing a single photon to a $N_S: 1 - N_S$ beam splitter. Further, Eq.~\eqref{optimumPOVM} dictates that the optimum POVM measures the basis $\{\left(\ketsi{01} \pm \ketsi{10}\right)/\sqrt{2}\}$. This may be accomplished by a 50:50 beam splitter followed by single-photon detection using two detectors at the output ports of the beam splitter. Hypothesis  $m \in \{0,1\}$ is declared if detector $D_m$ clicks \cite{polnote}.

 Let us first consider ideal operation and ignore the small beam splitters in Fig.~4 representing transmission and detection losses. It can be verified that the two possible states in the output modes $R_0$ and $R_1$ of the second beam splitter just before the detection are given by
\begin{align} \label{detstates}
\ket{\psi_{0 (1)}}_{R_0R_1} = \lambda_{+(-)} \ket{10}_{R_0R_1} + \lambda_{-(+)} \ket{01}_{R_0R_1}, 
\end{align} 
where
\begin{align}
\lambda_{+ (-)} = \frac{\sqrt{1-N_S} \pm\sqrt{N_S}}{\sqrt{2}}.
\end{align}
One of the detectors always clicks and the error probability evaluates to
\begin{align} \label{binerrorprob2}
\overline{P}_e = \lambda_-^2 = \frac{1}{2} - \sqrt{N_S(1-N_S)},
\end{align}
which agrees with Eq.~\eqref{binerrorprob}.

Let us now consider the performance of the setup of Fig.~4 in a realistic setting that includes transmission loss and non-unity quantum efficiency of the detectors as shown. We assume that the detectors have negligible dark count rates. Transmission loss may be the dominant factor in a long-distance communication system but may be negligible in the reading of a memory. For simplicity, we assume $\eta_S = \eta_I = \eta_T < 1$, which is a realistic assumption in the communication context, although it can be relaxed without altering our conclusions much. We also assume $\eta_D^{(0)} = \eta_D^{(1)} = \eta_D$ and let $\eta = \eta_T \eta_D$. 

By following the evolution of the probe state through the system while preserving unitarity by adding vacuum-state input modes at each of the small beam splitters, we can write the state of the entire system just before the detectors as 
\begin{align}
\ket{\psi_{0(1)}} = &  \sqrt{\eta}\,\left( \lambda_{+(-)}|10\rangle_{R_0R_1} + \lambda_{-(+)} \ket{01}_{R_0R_1}\right) \\ &\times\ket{0000} _{I'S'D_0'D_1'} + \ket{0}_{R_0}\ket{0}_{R_1} \ket{\phi_{\tsf{loss}}}_{{I'S'D_0'D_1'}},
\end{align}
where $S$, $I$, $D_0'$ and $D_1'$ denote the leakage modes and $\ket{\phi_{\tsf{loss}}}_{{I'S'D_0'D_1'}}$ denotes an (un-normalized) state of those modes whose squared norm is $(1-\eta)$ and is an eigenstate of the total photon number in the leakage modes with eigenvalue one. This second term corresponds to the case when none of the detectors click, i.e., an erasure occurs. The probability of erasure $P_\epsilon$ is then independent of $m$ and equals
\begin{align} \label{erasureprob}
P_\epsilon = 1- \eta.
\end{align}

On the other hand, when one of the detectors does click (which happens with probability $\eta$), the error probability of the ensuing decision is exactly the same as before, that given by \eqref{binerrorprob}. Note the erasure probability is independent of $N_S$ and the error probability conditioned on no erasure is independent of $\eta$. This latter probability is identical to the error probability obtainable from the optimum state of Corollary 1. In particular, if $N_S=1/2$, there is no error whenever there is no erasure. \newline

\section{Concluding Remarks}

In this paper, we have set up an $M$-ary phase discrimination problem that naturally models phase-based communication, phase sensing, and quantum reading of a phase-based digital memory.  Allowing for a general entanglement-assisted probing strategy, we have characterized the equivalence classes of probe states with the same performance. We have found the exact form of the optimizing probe state as a function of the energy $N_S$ and characterized the probes that allow zero-error discrimination. From a theoretical point of view, we have thus completely solved a constrained bosonic channel discrimination problem, a class of problems for which exact solutions are rare \cite{Weedbrooketal11}. We have studied the error performance of some standard states in quantum optics. For the $M=2$ case important to reading of a memory, we have shown that the optimal performance can be readily obtained with current technology conditioned on no erasure due to system losses.

From a more practical point of view, the analysis here is limited by not having included the effect of system losses in general. We note that including loss in our problem model brings it into the general lossy image sensing framework considered in ref.~\cite{NairYen11} so that the result of that paper on the form of the optimal probe can be used as a starting point for analysis. Nevertheless, the work of this paper remains essential to the subsequent analysis of the lossy system performance. 

Another serious practical problem is that of synthesizing the optimal probe states in either the lossy or lossless cases as well as realizing the optimal POVMs on them. It may be hoped that the flexibility in state preparation afforded by Theorem 1  can partially alleviate these problems, though it remains to be seen if optimal or near-optimal performance can be achieved in practice for some instances of the problems considered here.

\section{Acknowledgements}

R. N. and B. J. Y. are supported by the Singapore National Research Foundation under NRF Grant No. NRF-NRFF2011-07. Part of this work was accomplished while they were at the Massachusetts Institute of Technology and supported by DARPA's Quantum Sensors Program under AFRL Contract No. FA8750-09-C-0194. The work of S. G. is supported by the DARPA Information in a Photon program under contract No. HR0011-10-C-0162, that of J. H. S. was supported by the DARPA Quantum Sensors Program and the ONR Basic Research Challenge Program, and that of S. P. by EPSRC (EP/J00796X/1).



\begin{thebibliography}{10}
\expandafter\ifx\csname url\endcsname\relax
  \def\url#1{\texttt{#1}}\fi
\expandafter\ifx\csname
urlprefix\endcsname\relax\def\urlprefix{URL }\fi

\bibitem{phaseestimationreferences} B. M. Escher, R. L. de Matos Filho, and L. Davidovich,
Nature Phys. \textbf{7}, 406 (2011); R.~Demkowicz-Dobrza\'{n}ski, Phys. Rev. A \textbf{83}, 061802R (2011); A.~Datta \emph{et al.}, Phys. Rev. A \textbf{83}, 063836 (2011); J.~Joo, W. J.~Munro, and T. P.~Spiller, Phys. Rev. Lett. \textbf{107}, 083601 (2011); N.~ Thomas-Peter \emph{et al.}, Phys. Rev. Lett. \textbf{107}, 113603 (2011); P. M.~Anisimov, et al., Phys. Rev. Lett. \textbf{104}, 103602(2010); A.~Rivas and A.~Luis, arXiv:1105.6310v2 (2011);  G. Y.~Xiang \emph{et al.} Nature Photon. \textbf{5} 43 (2011); V.~Giovannetti, S.~Lloyd, and L.~Maccone, arXiv:1109.5661 (2011); M. J. W.~Hall \emph{et al.}, Phys. Rev. A \textbf{85} 041802(R) (2012); M.~Tsang, arXiv:1111.3568v3 (2011); V.~Giovannetti and L.~Maccone, Phys. Rev. Lett. \textbf{108} 210404 (2012); M. J. W.~Hall and H. M.~Wiseman, New J. Phys. \textbf{14}, 033040 (2012); R.~Nair, arXiv:1204.3761 (2012); M. J. W.~Hall and  H. M.~Wiseman, arXiv:1205.2405 (2012).

\bibitem{Proakis07} J. G.~Proakis, \emph{Digital Communications}, (Mc-Graw Hill, 5th ed., 2007).

\bibitem{Agrawal10} G. P.~Agrawal, \emph{Fiber-Optic Communication Systems} (Wiley Series in Microwave and Optical Engineering, Wiley-Interscience , 4th ed., 2010).

\bibitem{GnauckWinzer05} A. H.~Gnauck and P. J.~Winzer, Jrnl. Lightwave Tech. \textbf{23}, 115 (2005).

\bibitem{HallFuss91} M. J. W.~Hall and I. G.~Fuss,  Quant. Opt. \textbf{3}, 147 (1991).

\bibitem{Shapiro93} J. H.~Shapiro, Physica Scripta \textbf{T 48}, 105 (1993).

\bibitem{Kato99} K.~Kato, M.~Osaki, M.~Sasaki, and O.~Hirota, IEEE Trans. Comm. \textbf{47}, 248 (1999).

\bibitem{Becerraetal11} F. E.~Becerra, J.~Fan, G.~Baumgartner, S. V.~Polyakov, J.~Goldhar, J. T.~Kosloski, and A.~Migdall, Phys. Rev. A \textbf{84} 062324 (2011).

\bibitem{Pirandola11} S.~Pirandola, Phys. Rev. Lett. \textbf{106}, 090504 (2011).

\bibitem{Nair11} R.~Nair, Phys. Rev. A \textbf{84}, 032312 (2011).

\bibitem{Hirota11} O.~Hirota, e-print arXiv:1108.4163 (2011).

\bibitem{Pirandolaetal11} S.~Pirandola,  C.~Lupo, V.~Giovannetti, S.~Mancini, and S. L.~Braunstein,
New J. Phys. \textbf{13}, 113012 (2011).

\bibitem{Guhaetal11} S.~Guha, Z.~Dutton, R.~Nair, J. H.~Shapiro, and B. J.~Yen,
in \emph{Frontiers in Optics/Laser Science}, paper LTuF2, OSA
Technical Digest (Optical Society of America, 2011).

\bibitem{Weedbrooketal11} C.~Weedbrook, S.~Pirandola, R.~Garc\'{i}a-Patr\'{o}n, N. J.~Cerf, T. C.~Ralph, J. H.~Shapiro, and S.~Lloyd, arXiv:1110.3234 (2011).

\bibitem{Brooker03} G.~Brooker, \emph{Modern Classical Optics} (Oxford Master Series in Atomic, Optical and Laser Physics, Oxford Univ. Press, Oxford, UK, 2003), Chap.~16. 

\bibitem{Bisio11} A.~Bisio, M.~Dall'Arno, and G. M.~D'Ariano, Phys. Rev. A \textbf{84},
012310 (2011).

\bibitem{Dall'Arno11} M.~Dall'Arno,  A.~Bisio, G. M.~D'Ariano, M.~Mikov\'{a}, M.~Je\v{z}ek, and M.~Du\v{s}ek, e-print arXiv:1111.5935 (2011).

\bibitem{Belavkin75} V. P.~Belavkin, Stochastics \textbf{1}, 315 (1975).

\bibitem{Helstrom76}
C. W.~Helstrom, \emph{Quantum Detection and Estimation Theory} (Academic Press, New York, 1976).

\bibitem{Ban97} M.~Ban, K.~Kurokawa, R.~Momose, and O.~Hirota,
Int. J. of Theor. Phys. \textbf{36}, 1269 (1997).

\bibitem{Holevo79} A. S.~Holevo, Problemy Peredachi lnformatsii, \textbf{15}, 3 [Problems of lnformation Transmission
(USSR), \textbf{15}, 247] (1979).

\bibitem{HausladenWootters94} E.~Hausladen and W. K.~Wootters, Journal of Modern Optics \textbf{41}, 2385 (1994).

\bibitem{Hausladenetal96} E.~Hausladen, R.~Jozsa,  B.~Schumacher, M.~Westmoreland, and W. K.~Wootters, Phys.
Rev. A  \textbf{54}, 1869 (1996).

\bibitem{footnote1} A symmetric state set is called a homogeneous state set in \cite{Belavkin75}.

\bibitem{GLM11} V.~Giovannetti, S.~Lloyd, and L.~Maccone, Nat. Photonics \textbf{5} 222 (2011).
 
\bibitem{Harrow10} A. W.~Harrow, A.~Hassidim, D. W. ~Leung, and J.~Watrous, Phys. Rev. A \textbf{81}, 032339 (2010).
 
\bibitem{footnote2} For the present problem of discriminating equiprobable symmetric phase shifts, this may be seen from the formulas (\ref{Peformula})-(28) for the error probability. For an arbitrary pure-state ensemble with arbitrary prior probabilities, an argument was sketched in Ref.~[15] of \cite{NairYen11}.

\bibitem{D'Ariano01} G. M.~D'Ariano, P.~Lo Presti, and M. G. A.~Paris, Phys. Rev. Lett. \textbf{87}, 270404 (2001).

\bibitem{Acin01} A.~Ac\'{i}n, Phys. Rev. Lett. \textbf{87}, 177901 (2001).

\bibitem{NairYen11} R.~Nair and B. J.~Yen, Phys. Rev. Lett. \textbf{107}, 193602 (2011).

\bibitem{ECS} B. C.~Sanders, Phys. Rev. A \textbf{45}, 6811 (1992); S. J.~van Enk and O.~Hirota, Phys. Rev. A \textbf{64}, 022313 (2001).

\bibitem{footnoteECS} Thus, interestingly, the signal energy of $\ket{\psi}_{ECS}$ is discontinuous in $|\alpha|$ at $|\alpha|=0$.

\bibitem{footnote3} While the formula \eqref{Peformula} was applied in Ref.~\cite{Kato99} to a case where the output states $\{\ket{\psi_m}\}$ are linearly independent, it can be verified that it is valid in general. 

\bibitem{BoydVandenberghe04} S.~Boyd and L.~Vandenberghe, \emph{Convex Optimization} (Cambridge University Press, Cambridge, U.K., 2004).

\bibitem{PeggBarnett} D. T.~Pegg and S. M.~Barnett, Europhys. Lett. \textbf{6}, 483 (1988); D. T.~Pegg and S. M.~Barnett, Phys. Rev. A \textbf{39}, 1665 (1989).

\bibitem{notemixed} This is a 'perfect-measurement' lower bound
argument such as that used in composite hypothesis testing \cite{VT}.

\bibitem{VT} H. L. Van Trees, \emph{Detection, Estimation, and Modulation Theory: Part I}, 
(Wiley-Interscience, 1st ed., 2001).

\bibitem{notemixed2} The geometric mean $\sqrt{x_1x_2}$ is concave
on $\mathbb{R}^+\times\mathbb{R}^+$ \cite{BoydVandenberghe04}.

\bibitem{GerryKnight05} C. C.~Gerry and P. L. ~Knight, \emph{Introductory Quantum Optics} (Cambridge University Press, Cambridge, U.K., 2005).

\bibitem{Shapiro09} J. H.~Shapiro, IEEE J. Sel. Top. Quant. Elec. \textbf{15}, 1547 (2009).

\bibitem{Yuen76} H. P.~Yuen, Phys. Rev. A \textbf{13}, 2226 (1976).

\bibitem{PCSrefs} G. S. ~Agarwal, Phys. Rev. Lett. \textbf{57}, 827 (1986); G. S. ~Agarwal, J. Opt. Soc. Am. \textbf{5}, 1940 (1988).

\bibitem{QSD} D. T.~Pegg, S. M.~Phillips, and S. M.~Barnett, Phys. Rev. Lett. \textbf{81}, 1604 (1998); S. K.~ \"{O}zdemir, A.~Miranowicz, M.~Koashi, and N.~Imoto, Phys. Rev. A \textbf{64}, 063818 (2001).

\bibitem{Fiurasek05} J.~Fiur\'{a}\v{s}ek, R. Garc\'{i}a-Patr\'{o}n, and N. J.~Cerf, Phys. Rev. A \textbf{72}, 033822 (2005).

\bibitem{Bimbard10} E.~Bimbard, N.~Jain, A.~MacRae, and A. I.~Lvovsky, Nat. Photon. \textbf{4}, 243 (2010).

\bibitem{Koketal07} P.~Kok, W. J.~Munro, K.~Nemoto, T. C.~Ralph, and J. P.~Dowling, Rev. Mod. Phys. \textbf{79}, 135 (2007). 

\bibitem{polnote} The scheme of Fig.~10 can be easily modified in the case where $S$ and $I$ refer to orthogonal polarizations of a single photon in a given spatiotemporal mode, although some care is needed in the design to ensure that the $I$-polarization does not undergo the $0 \,(\pi)$ phase shift that the $S$-polarization does.

\end{thebibliography}
\end{document}